\def\typea{blue}
\def\typeb{red}
\def\doctype{2}
\def\tsubmission{1}
\newcommand{\full}[1]{}
\newcommand{\submit}[1]{#1}
\newcommand{\full}[1]{#1}
\newcommand{\submit}[1]{}
\newtheorem{claim}{Claim}
\newcommand{\set}[1]{\{#1\}}
\newcommand{\E}{\mathbf{E}}
\newcommand{\EX}{\mathbf{E}}
\newcommand{\Sec}[1]{\hyperref[sec:#1]{\S\ref*{sec:#1}}} 
\newcommand{\Eqn}[1]{\hyperref[eqn:#1]{(\ref*{eqn:#1})}} 
\newcommand{\Fig}[1]{\hyperref[fig:#1]{Fig.\,\ref*{fig:#1}}} 
\newcommand{\Tab}[1]{\hyperref[tab:#1]{Tab.\,\ref*{tab:#1}}} 
\newcommand{\Thm}[1]{\hyperref[thm:#1]{Thm.\,\ref*{thm:#1}}} 
\newcommand{\Lem}[1]{\hyperref[lem:#1]{Lem.\,\ref*{lem:#1}}} 
\newcommand{\Prop}[1]{\hyperref[prop:#1]{Prop.~\ref*{prop:#1}}} 
\newcommand{\Cor}[1]{\hyperref[cor:#1]{Cor.~\ref*{cor:#1}}} 
\newcommand{\Def}[1]{\hyperref[def:#1]{Defn.~\ref*{def:#1}}} 
\newcommand{\Alg}[1]{\hyperref[alg:#1]{Alg.\,\ref*{alg:#1}}} 
\newcommand{\Ex}[1]{\hyperref[ex:#1]{Ex.~\ref*{ex:#1}}} 
\newcommand{\Clm}[1]{\hyperref[clm:#1]{Claim~\ref*{clm:#1}}} 
\newcommand{\Step}[1]{\hyperref[step:#1]{Step~\ref*{step:#1}}} 
\newcommand{\cC}{\mathcal{C}}
\newcommand{\cD}{\mathcal{D}}
\newcommand{\cF}{\mathcal{F}}
\newcommand{\cN}{\mathcal{N}}
\newcommand{\cR}{\mathcal{R}}
\newcommand{\cS}{\mathcal{S}}
\newcommand{\cW}{\mathcal{W}}
\newcommand{\cE}{{\cal E}}
\newcommand{\algstream}{{\sc Streaming-Triangles}}
\newcommand{\algupdate}{{\sc Update}}
\newcommand{\algbit}{{\sc Single-Bit}}
\newcommand{\isClosed}{{\it isClosed}}
\newcommand{\true}{{\tt true}}
\newcommand{\false}{{\tt false}}
\newcommand{\wedgeres}{{\it wedge\_res}}
\newcommand{\edgeres}{{\it edge\_res}}
\newcommand{\newwedges}{{\it new\_wedges}}
\newcommand{\totwedges}{{\it tot\_wedges}}
\newcommand{\stor}{s}
\newcommand{\singlestor}{|\cR|}
\newcommand{\gcc}{\kappa}
\newcommand{\ignore}[1]{}
\begin{document}

\markboth{M. Jha et al.}{A space efficient streaming algorithm for  estimating transitivity and triangle counts}

\title{A space efficient streaming algorithm for  estimating transitivity and triangle counts using the birthday paradox}
\author{MADHAV JHA
\affil{Sandia National Laboratories}
C. SESHADHRI
\affil{Sandia National Laboratories}
ALI PINAR
\affil{Sandia National Laboratories}}

\begin{abstract}
We design a space efficient algorithm that approximates the transitivity (global clustering coefficient)
and total triangle count
with only a single pass through a graph given as a stream of edges. Our procedure is based on the classic probabilistic
result, \emph{the birthday paradox}. 
When the transitivity is constant and there are more edges than wedges (common properties for social networks),
we can prove that our algorithm requires $O(\sqrt{n})$ space ($n$ is the number of vertices) to provide accurate estimates.
We run a detailed set of experiments on a variety of real graphs and demonstrate
that the memory requirement of the algorithm is a tiny fraction of the graph. For example, even for a graph
with 200 million edges, our algorithm stores just 60,000 edges to give accurate results.
Being a single pass streaming algorithm, our procedure also
maintains a real-time estimate of the transitivity/number of triangles of a graph, by storing
a minuscule fraction of edges. \\
\end{abstract}

\category{E.1}{Data Structures}{Graphs and Networks}
\category{F.2.2} {Nonnumerical Algorithms and Problems}{Computations on discrete structures}
\category{G.2.2} {Graph Theory} {Graph algorithms}
\category{H.2.8} {Database Applications} {Data mining}

\terms{Algorithms, Theory}

\keywords{triangle counting,  streaming graphs, clustering coefficient, transitivity, birthday paradox, streaming algorithms}

\begin{bottomstuff}This manuscript is an extended version of \cite{JhSePi13}.\\
This work was funded by the GRAPHS program under DARPA,  Complex Interconnected Distributed Systems
    (CIDS) program  DOE Applied Mathematics Research program  and under Sandia's Laboratory Directed Research \& Development (LDRD) program. Sandia National Laboratories is a multi-program
    laboratory managed and operated by Sandia Corporation, a wholly   owned subsidiary of Lockheed Martin Corporation, for the   U.S. Department of Energy's National Nuclear Security
    Administration under contract DE-AC04-94AL85000.
\end{bottomstuff}

\maketitle
 
\section{Introduction}
Triangles are one of the most important motifs in real world networks. Whether the networks
come from social interaction, computer communications, financial transactions, proteins, or  ecology,  the abundance of triangles
is pervasive, and this abundance is a critical feature  that distinguishes real graphs from random graphs.  
There is a rich body of literature on analysis of triangles and counting algorithms.
Social scientists use triangle counts to understand graphs~\cite{Co88,Po98,Burt04,FoDeCo10};
graph mining applications such as spam detection and finding common topics on the {\sc WWW} use triangle counts~\cite{EcMo02,BeBoCaGi08};
motif detection in bioinformatics often count the frequency of triadic patterns~\cite{Milo2002}.
Distribution of degree-wise clustering coefficients was used as the driving force for a new generative model, Blocked Two-Level Erd\"os-R\'enyi~\cite{SeKoPi11}.  
Durak et al. observed that  the relationships among degrees of triangle vertices can be a descriptor of the underlying graph~\cite{DuPiKo12}. 
Nevertheless, counting triangles continues  to be a challenge due to sheer sizes of the graphs (easily in the order of billions of edges).

\begin{figure*}[tb]
  \centering
  \subfloat[Transitivity]{\includegraphics[width=0.5\textwidth]{./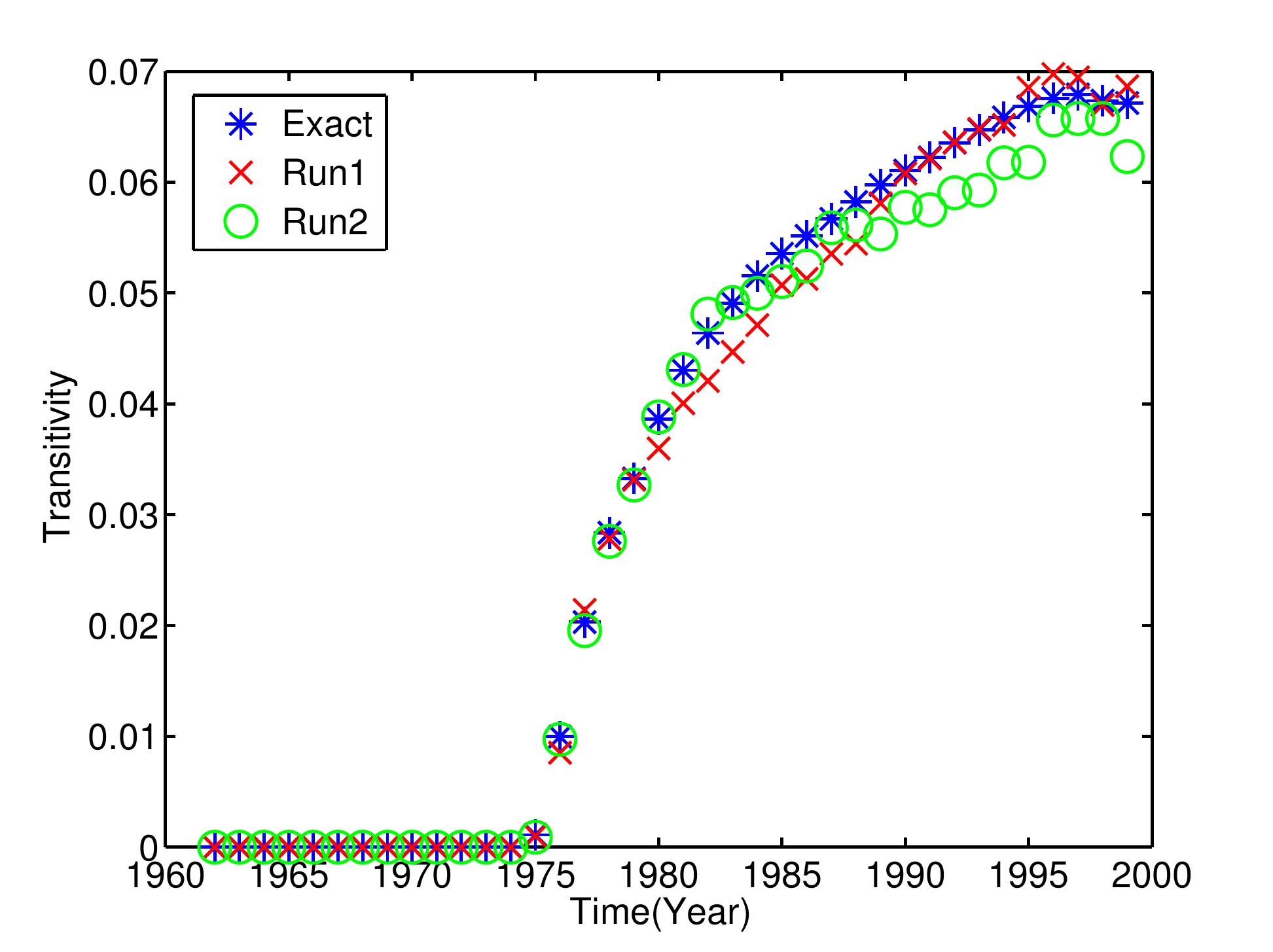}}
  \subfloat[Triangle count]{\includegraphics[width=0.5\textwidth]{./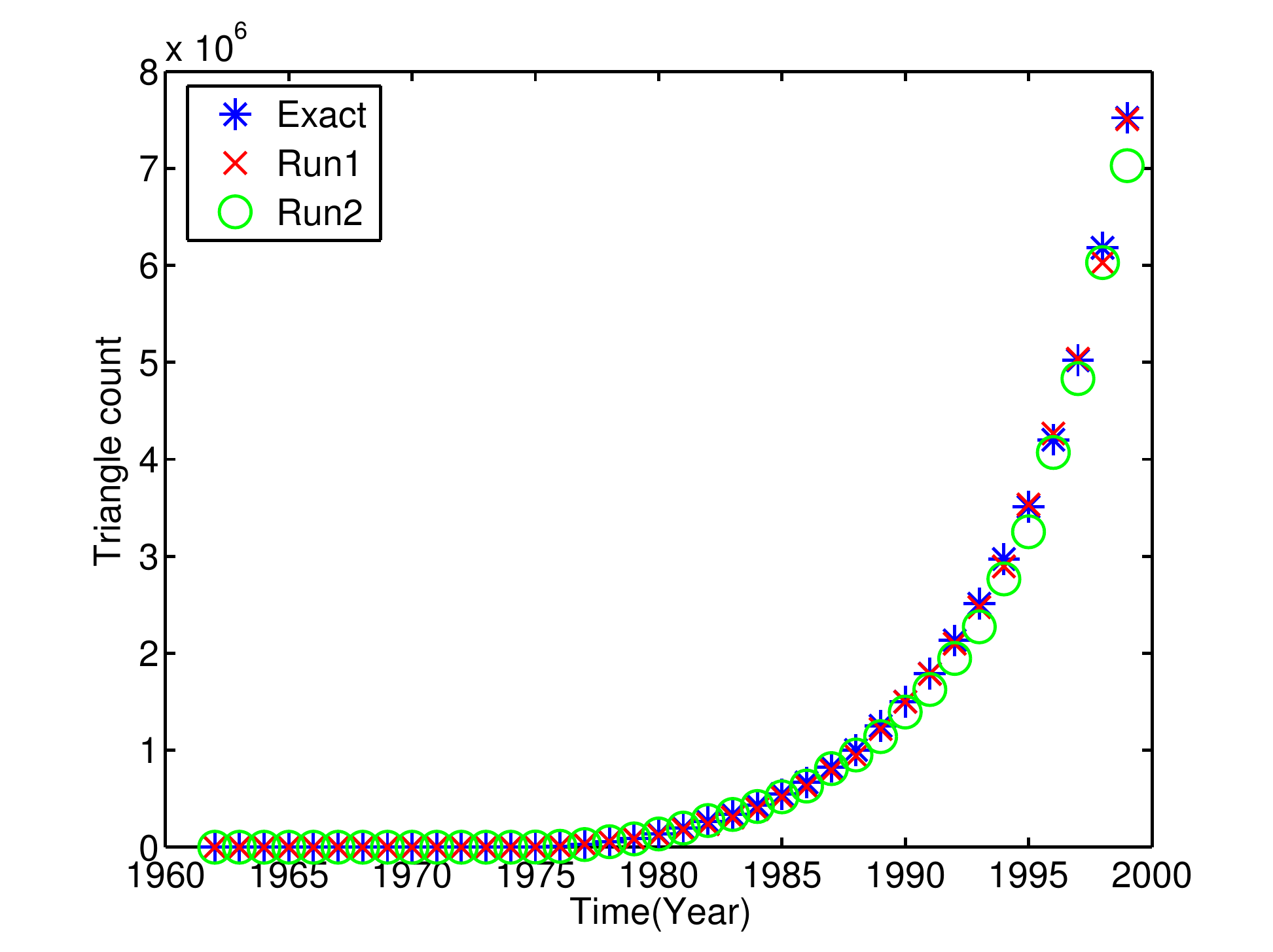}}
   \caption{Realtime tracking of number of triangles and transitivities on cit-Patents (16M edges), storing only 100K edges from the past.}
  \label{fig:tracking}
\end{figure*}

Many massive graphs come from modeling interactions in a dynamic system. 
People call each other on the phone, exchange emails, or co-author a paper;  computers exchange messages;  animals come in the vicinity of each other;  companies trade with each other. 
These interactions manifest as a \emph{stream of edges}. The edges appear with timestamps, or ``one at a time."  The network (graph) that represents the system  is an accumulation of the observed edges.  
There are many methods to deal with such massive graphs, such as random sampling~\cite{ScWa05-2,TsKaMiFa09,SePiKo13}, MapReduce paradigm~\cite{SuVa11,Pl12}, distributed-memory parallelism~\cite{ArKhMa12,Ch11}, adopting external memory~\cite{Ch95,ArGoSi10}, and multithreaded parallelism~\cite{BeHeKaKo07}. 

All of these methods however, need to store at least a large fraction of the data. On the other hand, 
a \emph{small space streaming algorithm} maintains a very small (using randomness)
set of edges, called the ``sketch", and updates this sample as edges appear. 
Based on the sketch and some auxiliary data structures, the algorithm computes an accurate estimate for the number of triangles for the graph seen so far.
The sketch size is orders of magnitude smaller than the total graph. Furthermore,
it can be updated rapidly when new edges arrive and hence maintains a real-time estimate of the number of triangles.
We also want a single pass algorithm, so it only observes each edge once (think of it as making a single scan
of a log file). The algorithm cannot revisit edges that it has forgotten.

\subsection{The streaming setting} \label{sec:problem}
Let $G$ be a simple undirected graph with $n$ vertices and $m$ edges.
Let $T$ denote the  number of triangles in the graph and $W$ be the number of \emph{wedges}, where a {\em wedge} is a path of length $2$.
A common measure is the \emph{transitivity} $\gcc = 3T/W$~\cite{WaFa94}, a 
measure of how often friends of friends are also friends. (This is also called the global
clustering coefficient.)

A single pass streaming algorithm is defined as follows. Consider a sequence
of distinct edges $e_1, e_2, \ldots, e_m$. Let $G_t$ be the \emph{graph at time $t$},
formed by the edge set $\{e_i | i \leq t\}$. The stream of edges can be considered as a sequence of 
\emph{edge insertions} into the graph. Vertex insertions can be  handled trivially.
We do not know the number of vertices ahead of time and simply see each edge as a pair
$(u,v)$ of vertex labels. New vertices are implicitly added as new labels.
There is no assumption on the order of edges in the stream. Edges incident to a single vertex
do not necessarily appear together.

In this paper, \emph{we do not consider edge/vertex deletions or repeated edges.} In that sense, this is a simplified
version of the full-blown streaming model.
Nonetheless, the edge insertion model on simple graphs is the standard for previous
work on counting triangles~\cite{BaKuSi02,JoGh05,BuFrLeMaSo06,AhGuMc12,KaMeSaSu12}.

A streaming algorithm has a small memory, $M$, and sees the edges in stream order. At each edge, $e_t$, the algorithm can choose
to update data structures in $M$ (using the edge $e_t$). 
Then the algorithm proceeds to $e_{t+1}$, and so on. The algorithm
is never allowed to see an edge that has already passed by.
The memory $M$ is much smaller than $m$, so the algorithm keeps
a small ``sketch" of the edges it has seen.
The aim is to estimate the number of triangles in $G_m$ at the end of the stream.
Usually, we desire the more stringent guarantee of maintaining a running estimate
of the number of triangles and transitivity of $G_t$ at time $t$. 
We denote these quantities respectively as $T_t$ and $\gcc_t$.	

\subsection{Results} \label{sec:results}

We present a single pass, $O(m/\sqrt{T})$-space algorithm to provably estimate the transitivity (with arbitrary
additive error) in a streaming graph. Streaming algorithms for counting triangles or computing the transitivity have been studied before,
but no previous algorithm attains this space guarantee.
Buriol et al.~\cite{BuFrLeMaSo06} give a single pass algorithm with a stronger relative error guarantee 
that requires space $O(mn/T)$. We discuss in more detail later.

Although our theoretical result is interesting asymptotically, the constant factors and dependence
on error in our bound are large. Our main result is a \emph{practical} streaming 
algorithm (based on the theoretical one) for computing $\gcc$ and $T$, using additional probabilistic heuristics.
We perform an extensive empirical analysis of our algorithm on a variety
of datasets from SNAP~\cite{Snap}.
The salient features of our algorithm are:

\begin{asparaitem}
	\item {\bf Theoretical basis:} Our algorithm is based on the classic \emph{birthday paradox}:
	if we choose $23$ random people, the probability that $2$ of them share a birthday is at least $1/2$ (Chap. II.3 of \cite{Fel50}).
	We extend this analysis for sampling wedges in a large pool of edges. The final streaming algorithm
	is designed by using reservoir sampling with wedge sampling \cite{SePiKo13} for estimating $\gcc$.
	We prove a space bound of $O(m/\sqrt{T})$, which we show is $O(\sqrt{n})$ under common conditions
	for social networks. In general, the number of triangles, $T$ is fairly large for many real-world graphs, and this
	is what gives the space advantage. 

	While our theory appears to be a good guide in designing the algorithm and explaining its behavior, 
it should not be used to actually decide space bounds in practice. For graphs where
$T$ is small, our algorithm does not provide good guarantees with small space (since $m/\sqrt{T}$ is large).

	\item {\bf Accuracy and scalability with small sketches:} We test our algorithm on a variety of graphs from
	different sources. In all instances, we get accurate estimates for $\gcc$ and $T$ by storing
	at most 40K edges. This is even for graphs where $m$ is in the order of millions. 
	Our relative errors on $\gcc$ and the number of triangles are mostly less than 5\% (In a graph with
	very few triangles where $\gcc < 0.01$, our triangle count estimate has relative error of 12\%). 
	Our algorithm can process extremely large graphs. Our experiments include a run on a streamed Orkut social network with 200M edges
	(by storing only 40K edges, relative errors are at most 5\%). We get similar results on streamed Flickr and Live-journal graphs with tens
	of millions of edges. 
	
	We run detailed experiments on some test graphs (with 1-3 million edges) with varying parameters to show convergence of our algorithm.
	Comparisons with previous work~\cite{BuFrLeMaSo06} show that our algorithm gets within 5\% of the true answer,
	while the previous algorithm is off by more than 50\%.
		 
	\item {\bf Real-time tracking:} For a temporal graph, our algorithm precisely tracks both $\gcc_t$
	and $T_t$ with less storage. By storing 60K edges of the past, we can track this
	information for a patent citation network with 16 million edges \cite{Snap}. Refer to \Fig{tracking}. We maintain
	a real-time estimate of both the transitivity and number of triangles with a single
	pass, storing less than 1\% of the graph. We see some fluctuations in the transitivity estimate due
	to the randomness of the algorithm, but the overall tracking is consistently accurate. 
\end{asparaitem}

\subsection{Previous work} \label{sec:prev}
Enumeration of all triangles is a well-studied problem~\cite{ChNi85,ScWa05,latapy08,BeFoNoPh11,ChCh11}.
Recent work by Cohen~\cite{Co09}, Suri and Vassilvitskii~\cite{SuVa11}, Arifuzzaman et al.~\cite{ArKhMa12} 
 give  massively parallel implementations
of these algorithms. 
Eigenvalue/trace based methods have also been used ~\cite{Ts08,Av10} to compute
estimates of the total and per-degree number of triangles. 

Tsourakakis et al.~\cite{TsDrMi09} started the use of sparsification methods, the most important of which
is Doulion~\cite{TsKaMiFa09}.  
Various analyses of this algorithm (and its variants) have been
proposed~\cite{KoMiPeTs10,TsKoMi11,YoKi11,PaTs12}. 
Algorithms based on wedge-sampling provide provable accurate estimations on various triadic measures on graphs~\cite{ScWa05-2,SePiKo13}.
 Wedge sampling techniques have also been applied to directed graphs~\cite{SePiKo13-2} and implemented with MapReduce~\cite{KoPiPlSe13}.
 
Theoretical streaming algorithms for counting triangles were initiated by Bar-Yossef et al.~\cite{BaKuSi02}.
Subsequent improvements were given in \cite{JoGh05,BuFrLeMaSo06,AhGuMc12,KaMeSaSu12}.
The space bounds achieved are of the form $mn/T$. Note that $m/\sqrt{T} \leq mn/T$ 
whenever $T \leq n^2$ (which is a reasonable assumption for sparse graphs). 
These algorithms are rarely
practical, since $T$ is often much smaller than $mn$. Some multi-pass streaming algorithms
give stronger guarantees, but we will not discuss them here.

Buriol et al.~\cite{BuFrLeMaSo06} give an implementation 
of their algorithm. For almost all of their experiments on graphs, with storage of 100K edges,
they get fairly large errors (always more than 10\%, and often more than 50\%). 
Buriol et al. provide an implementation in the incidence list setting, where all neighbors of a vertex arrive together.
In this case, their algorithm is quite practical since the errors are quite small.
Our algorithm scales to sizes (100 million edges) larger than their experiments. We get better
accuracy with far less storage, without any assumption on the ordering of the data stream. Furthermore, our algorithm
performs accurate real-time tracking.

Becchetti et al.~\cite{BeBoCaGi08} gave a semi-streaming algorithm for counting the triangles
incident to \emph{every} vertex. Their algorithm uses clever methods to approximate Jaccard similarities,
and requires multiple passes over the data.  Ahmed et al. studied sampling a subgraph from a stream of edges that preserves multiple properties of the original graph~\cite{AhJeKo13}.
Our earlier results  on triadic measures were presented in~\cite{JhSePi13}. 
More recently, Pavan et al.~\cite{PavanVLDB2013} introduce an approach called {\em neighborhood sampling} for estimating triangle counts which gives a 1-pass streaming algorithm with space bound  $O(m\Delta/T)$,  where $\Delta$ is the maximum degree of the graph.   Their implementation is practical and achieves good accuracy estimates on the lines
of our practical implementation. \cite{TangwongsanCIKM2013} explores a parallel implementation of \cite{PavanVLDB2013}. (As a minor comment, our algorithm gets good results by storing less than 80K edges, while~\cite{PavanVLDB2013} only shows comparable results for storing 128K ``estimators", each of which at least stores an edge.)

\subsection{Outline} \label{sec:outline}
A high-level description of our  practical algorithm  \algstream{} is presented in \Sec{highlevel}. 
We start with  the intuition behind the algorithm, followed by a detailed description of the implementation. 
\Sec{algorithm} provides a  theoretical analysis for an
idealized variant called \algbit{}.
We stress that \algbit{} is a thought experiment
to highlight the theoretical aspects of our result, and we do not actually implement it.  Nevertheless,  this algorithm forms that basis of  a practical algorithm, and in \Sec{circum},
we explain the heuristics used to get \algstream. \Sec{analysis} gives
an in-depth mathematical analysis of \algbit.

In \Sec{experiments}, we give various empirical results of our runs of \algstream{} on real graphs.
We show that na\"{i}ve implementations based on \algbit{} perform poorly in practice, and we need
our heuristics to get a practical algorithm. 

\section{The Main Algorithm}
\label{sec:highlevel}
\subsection{Intuition for the algorithm} \label{sec:int}

The starting point for our algorithm is the idea of \emph{wedge sampling} to estimate the transitivity, $\gcc$~\cite{SePiKo13}. A wedge is \emph{closed} if it participates in a triangle and \emph{open} otherwise.
Note that $\gcc = 3T/W$ is exactly the probability that a uniform random wedge
is closed. This leads to a simple randomized algorithm for estimating $\gcc$ (and $T$), by generating
a set of (independent) uniform random wedges and finding the fraction that are closed.
But how do we sample wedges from a stream of edges?

Suppose we just sampled a uniform random
set of edges. How large does this set need to be to get a wedge? The \emph{birthday paradox}
can be used to deduce that (as long as $W \geq m$, which holds for a great majority, if not all, of real  networks) $O(\sqrt{n})$ edges suffice. 
 A more sophisticated 
result, given in \Lem{wedge-via-collisions}, provides (weak) concentration bounds on
the number of wedges generated by a random set of edges. A
``small" number of uniform random edges can give enough wedges to perform wedge sampling
(which in turn is used to estimate $\gcc$). 

A set of uniform random edges can be maintained by \emph{reservoir sampling} \cite{Vi85}.
From these edges, we generate a random wedge by doing a second level of reservoir sampling. This process
implicitly treats the wedges created in the edge reservoir as a stream, and performs reservoir
sampling on that. Overall, this method approximates uniform random wedge sampling.

As we maintain our reservoir wedges, we check for closure by the future edges
in the stream. But there are closed wedges that cannot be verified, because the closing edge
may have already appeared in the past. A simple observation used by past streaming algorithms saves the day~\cite{JoGh05,BuFrLeMaSo06}.
In each triangle, there is exactly one wedge
whose closing edge appears in the future. So we try
to approximate the fraction of these ``future-closed" wedges, which is exactly one-third of the fraction of closed wedges.

Finally, to estimate $T$ from $\gcc$, we need an estimate of the total number of wedges $W$.
This can be obtained by reverse engineering the birthday paradox: given the number of wedges
in our reservoir of sample edges, we can estimate $W$ (again, using the workhorse \Lem{wedge-via-collisions}).

\subsection{The procedure \large{\algstream}} \label{sec:algstream}
The streaming algorithm maintains two primary data structures: the \emph{edge reservoir} and the \emph{wedge reservoir}. The edge reservoir maintains a uniform random sample of edges  observed so far. The wedge reservoir aims to select a uniform sample of  wedges.
Specifically, it maintains a uniform sample of the wedges created by the edge reservoir at any step of the process.  
(The wedge reservoir may include wedges whose edges are no longer  in the  edge reservoir.)
The two parameters for the streaming algorithm are $\stor_e$ and $\stor_w$, the sizes of  edge and wedge pools, respectively.
The main algorithm is described in \algstream{}, although most of the technical
computation is performed in \algupdate{}, which is invoked every time a new edge appears.

After edge $e_t$ is processed by \algupdate, the algorithm computes running estimates for $\gcc_t$ and $T_t$. 
These values do not have to be stored, so they are immediately output.
We describe the main data structures of the algorithm \algstream.

\begin{compactitem}
	\item Array \edgeres$[1\cdots \stor_e]$: This is the array of \emph{reservoir edges} and is the subsample of the stream maintained. 
	\item New wedges $\cN_t$: This is a list of all wedges involving $e_t$ formed only by edges in \edgeres. This
	may often be empty, if $e_t$ is not added to the \edgeres. We do not necessarily maintain this list explicitly, and we discuss implementation details later.
	\item Variable \totwedges: This is the total number of wedges formed by edges in the current \edgeres.
	\item Array \wedgeres$[1 \cdots s_w]$: This is an array of \emph{reservoir wedges} of size $s_w$. 
	\item Array \isClosed$[1 \cdots s_w]$: This is a boolean array. We set \isClosed$[i]$ to be \true{}
	if wedge \wedgeres$[i]$ is detected as closed.
\end{compactitem}
\medskip
On seeing edge $e_t$, \algstream{} updates the data structures.
The estimates $\gcc_t$ and $T_t$ are computed using the fraction of \true{} bits
in \isClosed, and the variable \totwedges.

\begin{algorithm}
 \caption{\algstream($\stor_e,s_w$)}\label{alg:stream}
 \DontPrintSemicolon
Initialize \edgeres{} of size $\stor_e$ and \wedgeres{} of size $s_w$.
For each edge $e_t$ in stream,\;
\ \ \ \ Call \algupdate($e_t$).\;
\ \ \ \ Let $\rho$ be the fraction of entries in \isClosed{} set to \true.\;
\ \ \ \ Set $\gcc_t = 3\rho$.\;
\ \ \ \ Set $T_t = [\rho t^2/s_e(s_e-1)] \times \totwedges$.\;
\end{algorithm}

\algupdate{} is where all the work happens, since it processes each edge $e_t$ as it arrives.
Steps~\ref{step:one}--\ref{step:3} determine all the wedges in  the wedge reservoir that are closed by $e_t$ and updates \isClosed{} accordingly.
In Steps~\ref{step:res1}-\ref{step:res4}, we perform reservoir sampling on \edgeres{}.
This involves replacing each entry by $e_t$ with probability $1/t$. The  remaining steps are executed
iff this leads to any changes in \edgeres. We perform some updates to \totwedges{} and determine the
new wedges $\cN_t$. Finally, in Steps~\ref{step:wed}-\ref{step:wed-fin}, we perform reservoir sampling on \wedgeres{}, where each entry
is randomly replaced with some wedge in $\cN_t$. Note that we may remove wedges that have already closed.

\begin{algorithm}
 \caption{\algupdate($e_t$)}\label{alg:update}
 \DontPrintSemicolon
{\bf for} $i=1,\ldots, s_w$ \\ \label{step:one}
\ \ {\bf if}  \wedgeres{}$[i]$ closed by $e_t$ \;
\ \ \ \ \isClosed$[i] \leftarrow $ \true \; \label{step:3}
{\bf for} $i=1,\ldots, s_e$ \; \label{step:res1}
\ \  Pick a random number $x$ in $[0,1]$\;
\ \ {\bf if}  $ x\leq 1/t$ \;
\ \ \ \ \edgeres$[i]  \leftarrow e_t$.\label{step:res4}\;
{\bf if}  there were any updates of \edgeres \\
\ \  Update \totwedges{}, the number of wedges formed by \edgeres.\;
\ \ Determine $\cN_t$ (wedges involving $e_t$) and let \newwedges{} $= |\cN_t|$.\;
\ \ {\bf for }  $i,\ldots s_w$ ,\label{step:wed}\;
\ \ \ \  Pick a random number $x$ in $[0,1]$\;
\ \ \ \ {\bf if}  $ x\leq  \newwedges/\totwedges$ \;
\ \ \ \ \ \ Pick uniform random  $w \in \cN_t$.\;
\ \ \ \ \ \  \wedgeres$[i] \leftarrow w$.\\
\ \ \ \ \ \  \isClosed$[i] \leftarrow $ \false.\label{step:wed-fin}\;
\end{algorithm}

\subsection{Implementation details} \label{sec:implement}

Computing
$\gcc_t$ and $T_t$ are simple and require no overhead. We maintain \edgeres{}
as a time-variable subgraph.
Each time \edgeres{} is updated, the subgraph
undergoes an edge insert and edge delete. Suppose $e_t = (u,v)$.
Wedges in $\cN_t$ are given by the neighbors of $u$ and $v$ in this subgraph.
From random access to the neighbor lists of $u$ and $v$, we can generate
a random wedge from $\cN_t$ efficiently.

Updates to the edge reservoir  are very infrequent.
At time $t$, the probability of an update is $1 - (1-1/t)^{\stor_e}$. By linearity
of expectation, the total number of times that \edgeres{} is updated is
$$ \sum_{t \leq m} 1 - (1-1/t)^{\stor_e} \approx \sum_{t \leq m} \stor_e/t \approx \stor_e \ln m $$
For a fixed $\stor_e$, this increases very slowly with $m$. So for most steps, we neither update \edgeres{} or sample a new wedge.

The total number of edges that are stored from the past is $\stor_e + \stor_w$. The edge
reservoir explicitly stores edges, and at most $\stor_w$ edges are implicitly stored (for closure).
Regardless of the implementation, the extra data structures overhead is at most twice
the storage parameters $\stor_e$ and $\stor_w$. Since these are at least $2$ orders of magnitude
smaller than the graph, this overhead is affordable. 

\section{The idealized algorithm \large{\algbit}} 
\label{sec:algorithm}

\subsection{Description of the Algorithm}
\label{sec:ideal-desc}
The algorithm \algbit{} is an idealized variant of \algstream{} that
we can formally analyze.
It requires more memory and expensive updates, but explains
the basic principles behind our algorithm. 
We later give the memory reducing heuristics that take us from \algbit{} to \algstream{}.

The procedure \algbit{} outputs a single (random) bit, $b_t$,
at each $t$. The expectation of this bit is related to the transitivity $\gcc_t$. 
\algbit{} maintains a set of reservoir edges $\cR$ of fixed size. 
We use $\cR_t$ to denote the reservoir at time $t$; abusing notation, the size is just
denoted by $\singlestor$ since it is independent of $t$.
The set of wedges constructed from $\cR_t$ is $\cW_t$.
Formally, $\cW_t = \{ \textrm{wedge} \ (e,e') | e,e' \in \cR_t\}$. 
\algbit{} maintains a set $\cC_t$, the set of wedges in $\cW_t$ 
for which it has \emph{detected a closing edge}.
Note that this is a subset of all closed wedges in $\cW_t$.
This set is easy to update as $\cR_t$ changes.
\begin{algorithm}
 \caption{\algbit}\label{alg:bit}
 \DontPrintSemicolon
For each $e_t$ in stream,\;
\ \ \ For each edge in $\cR_{t-1}$, replace it independently by $e_t$ with probability $1/t$. This yields $\cR_t$.\;
\ \ \ Construct the set of wedges $\cW_t$.\;
\ \ \ Denoting $\cD_t$ as the set of all wedges in $\cW_t$ closed by $e_t$, update $\cC_t = (\cC_{t-1} \cap \cW_t) \cup \cD_t$.\;
\ \ \ If $\cW_t$ is empty, \;
\ \ \ \ \ \ output $b_t = 0$ \;
\ \ \ Else \;
\ \ \ \ \ \ \ Pick a uniform random  wedge in $\cW_t$.\;
\ \ \ \ \ \ \ Output $b_t = 1$ if this wedge is in $\cC_t$ and $b_t = 0$ otherwise.\;
\end{algorithm}

For convenience, we state our theorem for the final time step. However, it also holds (with an identical proof) for any
large enough time $t$. It basically argues that the expectation of $b_m$ is almost $\gcc_m/3$. 
Furthermore, $|\cW_m|$ can be used to estimate $W$.

\begin{theorem} \label{thm:algbit} Assume $W \geq m$ and fix $\beta \in (0,1)$.
Suppose $\singlestor \geq c m/(\beta^3\sqrt{T})$, for some sufficiently large constant $c$. Set $est = m^2 |\cW_m|/(\singlestor(\singlestor-1))$. 
Then $|\gcc/3 - \E[b_m]| < \beta$ and with probability $>1 - \beta$, $|W - est| < \beta W$.
\end{theorem}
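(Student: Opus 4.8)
The plan is to first pin down \emph{exactly} which wedges end up in $\cC_m$, then reduce both claims to the first and second moments of wedge counts in a with-replacement edge sample, invoking \Lem{wedge-via-collisions} for the concentration.

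\textbf{Structural step.} The crux is a contiguous-occupancy property of this particular reservoir rule: since $e_j$ is the arriving edge only at time $j$, any slot holding $e_j$ at time $m$ must have been set to $e_j$ at time $j$ and never overwritten since. Consequently, if a wedge $(e,e')$ with $e=e_j,\ e'=e_{j'}$ lies in $\cW_m$, then both edges have been present continuously since $\max(j,j')$. Calling a wedge \emph{future-closed} when its closing edge $e_{t''}$ satisfies $t''>\max(j,j')$ (each triangle contributes exactly one such wedge, so there are $T$ of them), for such a wedge in $\cW_m$ we have $t''\le m$, so at time $t''$ both edges are present, the wedge lies in $\cW_{t''}$, is added to $\cC$, and is then retained through step $m$ since it never leaves $\cW_t$. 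Conversely, a wedge can only enter $\cC$ when its closing edge arrives with both endpoints already in the reservoir, which forces it to be future-closed. Hence $\cC_m$ is exactly the set of future-closed wedges in $\cW_m$, and writing $\cW^{fc}_m$ for this set,
\[ \E[b_m]\;=\;\E\!\left[\frac{|\cW^{fc}_m|}{|\cW_m|}\,;\ \cW_m\neq\emptyset\right]. \]

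\textbf{First-moment step.} A short telescoping calculation shows each slot of $\cR_m$ is marginally uniform over $\{e_1,\dots,e_m\}$ and the slots are mutually independent, so $\cR_m$ is a size-$\singlestor$ sample with replacement. Counting wedges as slot-pairs, any unordered pair of slots realizes a fixed graph-wedge with probability $2/m^2$, giving
\[ \E[|\cW_m|]=\tbinom{\singlestor}{2}\tfrac{2W}{m^2}=\tfrac{\singlestor(\singlestor-1)W}{m^2},\qquad \E[|\cW^{fc}_m|]=\tfrac{\singlestor(\singlestor-1)T}{m^2}, \]
whose ratio is $T/W=\gcc/3$. The first identity already yields $\E[est]=W$, the unbiasedness behind the second claim.

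\textbf{Concentration and combination.} For the second claim I apply \Lem{wedge-via-collisions} to $\cW_m$: under $\singlestor\ge cm/(\beta^3\sqrt{T})$ the mean $\E[|\cW_m|]$ is large (of order $\beta^{-6}$, using $W\ge 3T$), so the lemma gives $\big|\,|\cW_m|-\E[|\cW_m|]\,\big|<\beta\,\E[|\cW_m|]$ with probability $>1-\beta$, which is exactly $|est-W|<\beta W$. For the first claim I apply the lemma twice, to $\cW_m$ and to the subcollection $\cW^{fc}_m$; the binding constraint is the rarer future-closed wedges, whose mean is proportional to $T$ rather than $W$, which is precisely why the space bound scales as $m/\sqrt{T}$. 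On the event $G$ that both counts lie within a $(1\pm\beta)$ factor of their means, the ratio $|\cW^{fc}_m|/|\cW_m|$ is within $\tfrac{1+\beta}{1-\beta}$ of $\gcc/3$, hence within $O(\beta)$; off $G$ (probability $O(\beta)$) the ratio is trivially in $[0,1]$ and contributes $O(\beta)$ to the expectation, as does the event $\cW_m=\emptyset$. Rescaling $\beta$ by the hidden constants closes $|\gcc/3-\E[b_m]|<\beta$.

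\textbf{Main obstacle.} The delicate part is the concentration, not the algebra: because two wedges sharing an edge are positively correlated, $\mathrm{Var}(|\cW_m|)$ is inflated well beyond its mean, so only the \emph{weak} second-moment bound of \Lem{wedge-via-collisions} is available, and it is this inflation that forces the $1/\beta^3$ blow-up in $\singlestor$. Converting simultaneous relative-error control of the two correlated counts $|\cW_m|$ and $|\cW^{fc}_m|$ into a bound on the expectation of their ratio — while absorbing the empty-reservoir and low-probability tail events — is the step demanding the most care, and the structural identity $\cC_m=\cW^{fc}_m$ is what makes this reduction possible at all.
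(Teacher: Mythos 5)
Your structural step and first-moment step are correct and coincide with the paper's own route: the identification of $\cC_m$ with the future-closed wedges of $\cW_m$ is \Clm{future}, the i.i.d.\ reservoir property and the two means $\singlestor(\singlestor-1)W/m^2$ and $\singlestor(\singlestor-1)T/m^2$ are \Clm{exp-y}, and your derivation of the $est$ claim matches the paper's. The genuine gap is in your ``Concentration and combination'' step, specifically in the event $G$. You take $G$ to be the event that \emph{both} counts lie within a $(1\pm\beta)$ \emph{relative} factor of their means and assert $\Pr[G^c]=O(\beta)$ from \Lem{wedge-via-collisions}. But for the future-closed family the lemma is applied with $|\cS|=T$, and its conclusion is only that, with probability $1-\gamma$,
\begin{align*}
\bigl|\,|\cC_m| - \EX[|\cC_m|]\,\bigr| \;\leq\; (\gamma W/T)\,\EX[|\cC_m|] \;=\; \gamma\, \EX[|\cW_m|],
\end{align*}
an \emph{additive} deviation at the scale of the larger mean $\EX[|\cW_m|]$. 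This is weaker than the relative bound you invoke by the factor $W/T = 3/\gcc$, which exceeds a thousand on the low-transitivity graphs in the paper. To extract relative error $\beta$ for $|\cC_m|$ from the lemma you would need $\gamma \leq \beta T/W$, inflating the lemma's space requirement to order $m W^{5/2}/(\beta^3 T^3)$, which is \emph{not} implied by the hypothesis $\singlestor \geq cm/(\beta^3\sqrt{T})$ (it would force $T \geq W$). So $\Pr[G]\geq 1-O(\beta)$ cannot be established under the theorem's assumptions, and the multiplicative conclusion that the ratio is within $\frac{1+\beta}{1-\beta}$ of $\gcc/3$ is out of reach; your ``main obstacle'' paragraph repeats the same overstatement (``simultaneous relative-error control'').

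The repair is exactly what the paper does in \Clm{e}: define the good event as both deviations being at most $\beta'\,\EX[|\cW_m|]$ (additive, at the scale of the larger mean), which \Lem{wedge-via-collisions} does deliver with probability $1-2\beta'$. On that event,
\begin{align*}
\frac{\EX[|\cC_m|] - \beta' \EX[|\cW_m|]}{(1+\beta')\,\EX[|\cW_m|]} \;\leq\; \frac{|\cC_m|}{|\cW_m|} \;\leq\; \frac{\EX[|\cC_m|] + \beta' \EX[|\cW_m|]}{(1-\beta')\,\EX[|\cW_m|]},
\end{align*}
which yields \emph{additive} error $O(\beta')$ around $\gcc/3$ --- and an additive guarantee is all that \Thm{algbit} claims, so nothing is lost. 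With $G$ replaced by this event, the rest of your accounting (the ratio lies in $[0,1]$ off the good event, and the $\{\cW_m=\emptyset\}$ case contributes $O(\beta)$) is sound and parallels \Clm{handle}.
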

 
The memory requirement of this algorithm is defined by $\singlestor$, which we assume to be $O(m/\sqrt{T})$.
We can show that $m/\sqrt{T} = O(\sqrt{n/\gcc})$ (usually much smaller for heavy tailed graphs)
when $W \geq m$. Denote the degree of vertex $v$ by $d_v$. In this case, we can bound $2W = \sum_v d_v(d_v-1) = \sum_v d^2_v - 2m \geq \sum_v d^2_v - 2W$,
so $W \geq \sum_v d^2_v/4$. By $2m = \sum_v d_v$ and the Cauchy-Schwartz inequality,
$$
\frac{m}{\sqrt{W}} \leq  \frac{\sum_v d_v}{ \sqrt{\sum_v d^2_v}} \leq  \frac{\sqrt{\sum_v 1} \sqrt{\sum_v d^2_v}}{\sqrt{\sum_v d^2_v}} = \sqrt{n}$$
Using the above bound, we get ${m}/{\sqrt T} = {\sqrt{ 3} m}/{\sqrt{\kappa W}} \leq \sqrt{3 n / \kappa}$. Hence, when $W \geq m$ and $\gcc$ is a constant (both reasonable assumptions for social networks),
we require only $O(\sqrt{n})$ space.

\subsection{Analysis of the algorithm} \label{sec:analysis}

\submit{Due to space considerations and for clarity's sake, we provide proof sketches in this version. 
Mathematical details
can be found in the online full version~\cite{JhSePi12}.}

The aim of this section is to prove \Thm{algbit}.
We begin with some preliminaries.
First, the set $\cR_t$ is a set of $\singlestor$ uniform i.i.d. samples
from $\{e_1, e_2, \ldots, e_t\}$, a direct consequence of reservoir sampling.
Next, we define \emph{future-closed} wedges. Take the final graph $G$ and label all edges with their
timestamp. For each triangle, the wedge formed by the earliest two timestamps is a \emph{future-closed wedge}. In other words,
if a triangle $T$ has edges $e_i, e_j, e_k$, ($i < j < k$), then the wedge $\{e_i, e_j\}$ is future-closed. 
The number of future-closed wedges is exactly $T$, since each triangle contains exactly one  such wedge.
We have a simple yet important claim about \algbit.

\begin{claim} \label{clm:future} The set $\cC_m$ is exactly the set of future-closed wedges in $\cW_m$.
\end{claim}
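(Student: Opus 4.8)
The plan is to prove the two inclusions separately, with the common engine being a monotonicity property of reservoir sampling: the only way an edge enters $\cR$ is at its own arrival time, so an edge present in $\cR_m$ must have occupied its reservoir slot \emph{continuously} from the moment it appeared until time $m$ (it was never overwritten). I would first record this as a preliminary observation: if $e_i \in \cR_m$, then $e_i \in \cR_t$ for every $i \le t \le m$.

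First I would handle the inclusion $\cC_m \subseteq \{\text{future-closed wedges in } \cW_m\}$. The update rule $\cC_t = (\cC_{t-1} \cap \cW_t) \cup \cD_t$ immediately gives the invariant $\cC_t \subseteq \cW_t$, since both terms of the union lie in $\cW_t$; in particular $\cC_m \subseteq \cW_m$. Tracing membership back through the recurrence (with $\cC_0 = \emptyset$), any $w \in \cC_m$ must have first been inserted through $\cD_{t^*}$ for some $t^* \le m$, i.e.\ $w \in \cW_{t^*}$ and $e_{t^*}$ closes $w$. Because $e_{t^*}$ is the third (closing) edge of the triangle, it is distinct from the two edges of $w$; those two edges lie in $\cR_{t^*}$ but differ from $e_{t^*}$, hence belong to $\{e_1, \dots, e_{t^*-1}\}$ and carry timestamps strictly below $t^*$. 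Among the triangle's three timestamps, $e_{t^*}$ is therefore the latest, so the two earliest are exactly the edges of $w$, which is precisely the definition of $w$ being future-closed.

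For the reverse inclusion, let $w = \{e_i, e_j\}$ with $i < j$ be a future-closed wedge in $\cW_m$, and let $e_k$ be its closing edge; future-closedness means $k > j$. Since $w \in \cW_m$ we have $e_i, e_j \in \cR_m$, so by the monotonicity observation both edges sit in $\cR_t$ for all $j \le t \le m$, and in particular both lie in $\cR_k$. Hence $w \in \cW_k$, and as $e_k$ closes $w$ we obtain $w \in \cD_k \subseteq \cC_k$. It then remains to check that $w$ is never dropped: for each $t$ with $k < t \le m$, both $e_i, e_j \in \cR_t$ force $w \in \cW_t$, so the retention term $\cC_{t-1} \cap \cW_t$ keeps $w$ in $\cC_t$; by induction $w \in \cC_m$.

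I expect the only genuine subtlety — and the step I would write most carefully — to be the persistence/monotonicity argument in the reverse inclusion: one must be sure both that the closing edge $e_k$ arrives while the two wedge edges are \emph{simultaneously} resident in the reservoir, and that once $w$ is detected it is never intersected away. Both rest on the fact that an edge surviving to $\cR_m$ has been continuously present since its arrival, which is exactly what forces the ``detected closing edge'' set to coincide with the future-closed wedges rather than with all closed wedges in $\cW_m$.
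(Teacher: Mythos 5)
Your proof is correct and follows essentially the same approach as the paper: both arguments hinge on the persistence of reservoir edges (a wedge in $\cW_m$ remains in $\cW_t$ from its formation time onward), so the closing edge is detected exactly when it arrives in the future and the detection is retained thereafter. The paper's version is terser—it argues by cases (future-closed or not) on a wedge of $\cW_m$ rather than by two set inclusions, and leaves the reservoir monotonicity implicit—but the underlying mechanism is identical to yours.
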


\begin{proof} Consider a wedge $\{e_i,e_j\}, i < j$ in $\cW_m$. This wedge
was formed at time $j$, and remains in all $\cW_{t}$ for $j \leq t \leq m$.
If this wedge is future-closed (say by edge $e_{t'}$, for $t' > j$), then
at time $t'$, the wedge will be detected to be closed. Since this information
is maintained by \algbit, the wedge will be in $\cC_m$. If the wedge
is not future-closed, then no closing edge will be found for it after time $j$.
Hence, it will not be in $\cC_m$.
\end{proof}

The main technical effort goes into showing that $|\cW_m|$, the number of wedges
formed by edges in $\cR_m$, can be used to determine the number
of wedges in $G_m$. Furthermore, the number of future-closed wedges in $\cR_m$ (precisely $|\cC_m|$, by \Clm{future}) 
can be used to estimate $T$.

This is formally expressed in the next lemma. Roughly, if $\singlestor = km/\sqrt{W}$,
then we expect $k^2$ wedges to be formed by $\cR_m$. We also get weak concentration bounds
for the quantity. A similar bound (with somewhat weaker concentration) holds even when we consider the set of future-closed wedges.

\begin{lemma}[Birthday paradox for wedges]\label{lem:wedge-via-collisions}
Let $G$ be a graph with $m$ edges and $\cS$ be a fixed subset of wedges in $G$.
Let $\cR$ be a set of i.i.d. uniform random edges from $G$.  
Let $X$ be the random variable denoting the number of wedges in $\cS$ formed by edges in $\cR$.

\begin{enumerate}
\item $\E[X] = {\singlestor \choose 2} (2|\cS|/m^2)$. 
\item Let $\gamma > 0$ be a parameter and $c'$ be a sufficiently large constant. Assume $W \geq m$. If $\singlestor \geq c'm/(\gamma^3 \sqrt W)$,
 then with probability at least $1 -\gamma$, $|X - \E[X]| \leq (\gamma W/|\cS|) \E[X]$.
\end{enumerate}
\end{lemma}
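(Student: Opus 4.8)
The plan is to read $X$ as a sum of indicator variables over unordered pairs of sample positions and to control its first two moments directly. Write $\cR = \{r_1,\dots,r_{\singlestor}\}$ for the $\singlestor$ i.i.d.\ uniform samples, and for each pair $\{a,b\}$ with $a \neq b$ set $Z_{ab} = \mathbf{1}[\{r_a,r_b\}\in\cS]$, so that $X = \sum_{\{a,b\}} Z_{ab}$. For a fixed wedge $\{e,e'\}\in\cS$ the two ordered assignments give $\Pr[\{r_a,r_b\}=\{e,e'\}] = 2/m^2$, hence $\Pr[Z_{ab}=1] = 2|\cS|/m^2 =: p$. Part (1) is then immediate from linearity, since there are $\binom{\singlestor}{2}$ position pairs: $\E[X] = \binom{\singlestor}{2}\,2|\cS|/m^2$.

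For part (2) I would bound $\mathrm{Var}(X)$ and apply Chebyshev. Expanding $\mathrm{Var}(X) = \sum_{P,Q}\mathrm{Cov}(Z_P,Z_Q)$ over ordered pairs of position-pairs, three cases arise: $P=Q$ contributes the diagonal $\binom{\singlestor}{2}p(1-p)\le\E[X]$; disjoint $P,Q$ are independent and contribute nothing; and the only correlated terms are those sharing exactly one index. For $P=\{a,b\}$, $Q=\{a,c\}$, conditioning on $r_a=e$ and writing $w(e)$ for the number of wedges of $\cS$ incident to $e$ gives $\E[Z_P Z_Q] = m^{-3}\sum_e w(e)^2$. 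Counting the $\singlestor(\singlestor-1)(\singlestor-2)$ such ordered configurations and bounding each covariance by $\E[Z_P Z_Q]$ yields
\[
\mathrm{Var}(X) \;\le\; \E[X] \;+\; \frac{\singlestor^3}{m^3}\sum_e w(e)^2 .
\]

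The crux of the argument—and the step I expect to be the main obstacle—is bounding the moment $\sum_e w(e)^2$ by $O(W^{3/2})$. Since $\cS$ is a subset of all wedges, $w(e)$ is at most the total number of wedges incident to $e=(u,v)$, namely $d_u+d_v-2$, so $\sum_e w(e)^2 \le \sum_{(u,v)\in E}(d_u+d_v)^2 = \sum_v d_v^3 + 2\sum_{(u,v)\in E} d_u d_v$. Using $2W = \sum_v d_v^2 - 2m$ together with the hypothesis $W\ge m$ gives $\sum_v d_v^2 \le 4W$, and therefore the maximum degree satisfies $\Delta \le \sqrt{\sum_v d_v^2}\le 2\sqrt{W}$. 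This is the key leverage: it forces $\sum_v d_v^3 \le \Delta\sum_v d_v^2 = O(W^{3/2})$, while the cross term is handled by AM--GM, $\sum_{(u,v)} d_u d_v \le \tfrac12\sum_v d_v^3$. Hence $\sum_e w(e)^2 = O(W^{3/2})$.

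Finally I would assemble the pieces. With $\E[X] = \Theta(\singlestor^2|\cS|/m^2)$ and the two variance contributions bounded as above, the hypothesis $\singlestor \ge c'm/(\gamma^3\sqrt W)$ (with $c'$ a large absolute constant) makes both $\E[X]$ and $\singlestor^3 m^{-3}\sum_e w(e)^2$ at most $\tfrac12\gamma^3(W/|\cS|)^2\E[X]^2$, using $|\cS|\le W$ for the first and the $O(W^{3/2})$ moment bound for the second; thus $\mathrm{Var}(X)\le \gamma^3(W/|\cS|)^2\E[X]^2$. Chebyshev's inequality with deviation $t=(\gamma W/|\cS|)\E[X]$ then gives $\Pr[|X-\E[X]|>t]\le \mathrm{Var}(X)/t^2\le\gamma$, which is exactly the claimed concentration. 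The only subtlety is the constant bookkeeping: the factor $\gamma^3$ in the sample-size hypothesis is precisely what is needed, since Chebyshev converts a relative accuracy of order $\gamma$ into a failure probability of order $\gamma$ at the cost of one extra factor of $\gamma$ in the variance budget.
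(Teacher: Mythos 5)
Your proposal is correct and takes essentially the same route as the paper's proof: indicator variables over pairs of sample positions for part (1), and for part (2) a Chebyshev argument with the variance split into diagonal, one-shared-index, and disjoint contributions, where the shared-index term is bounded by $O(s^3 \sum_v d_v^3/m^3) = O(s^3 W^{3/2}/m^3)$ using $\sum_v d_v^2 \leq 4W$ (which follows from $W \geq m$). The only cosmetic differences are that you compute the shared-index term exactly as $m^{-3}\sum_e w(e)^2$ before relaxing it, your max-degree step $\Delta \leq \sqrt{\sum_v d_v^2}$ is just a rephrasing of the paper's $\ell_3$-versus-$\ell_2$ norm inequality $\sum_v d_v^3 \leq (\sum_v d_v^2)^{3/2}$, and you keep the diagonal term $\E[X]$ separate in the final accounting rather than absorbing it into the $W^{3/2}$ term via $s \geq m/\sqrt{W}$ as the paper does.
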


Using this lemma, we can prove \Thm{algbit}. 
We first give a sketch of the proof. Later we will formalize our claims. 
At the end of the stream, the output bit $b_m$ is $1$ if $|\cW_m| > 0$ and 
a wedge from $\cC_m$ is sampled.
Note that both $|\cW_m|$ and $|\cC_m|$ are random variables.

To deal with the first event, we apply \Lem{wedge-via-collisions} with $\cS$ being the set of all wedges.
So, $\EX[|\cW_m|] = {\singlestor \choose 2} (2W/m^2) \approx \singlestor^2W/m^2$. 
If $\singlestor \geq cm/\sqrt{W}$, then $\EX[|\cW_m|] \geq c$ (a large enough number).
Intuitively, the probability that $|\cW_m| = 0$ is very small, and this can be bounded using the 
concentration bound of \Lem{wedge-via-collisions}.

Now, suppose that $|\cW_m| > 0$.
The probability that $b_m = 1$ (which is $\EX[b_m]$) is exactly the fraction $|\cC_m|/|\cW_m|$.
Suppose we could approximate this by $\EX[|\cC_m|]/\EX[|\cW_m|]$. By \Clm{future}, $\cC_m$ is the set
of future-closed wedges, the number of which is $T$,
so \Lem{wedge-via-collisions} tells us that $\EX[|\cC_m|] = {\singlestor \choose 2} (2T/m^2)$.
Hence, $\EX[|\cC_m|]/\EX[|\cW_m|] = T/W = \gcc/3$.

In general, the value of $|\cC_m|/|\cW_m|$ might be different from $\EX[|\cC_m|]/\EX[|\cW_m|]$.
But $|\cC_m|$ and $|\cW_m|$ are reasonably concentrated (by the second part of \Lem{wedge-via-collisions}, so we can argue
that this difference is small.

\vspace{2ex}
{\bf Proof of \Thm{algbit}:}
As mentioned in the proof sketch, the output bit $b_m$ is $1$ if $|\cW_m| > 0$ and a wedge from $\cC_m$ is sampled.
For convenience, we will use $Y = |\cW_m|$ for  the total number of wedges formed by edges in $\cR_m$,  and  we will use $Z = |\cC_m|$ for  the number of future-closed wedges formed by edges in $\cR_m$. 
Both $Y$ and $Z$ are random variables
that depend on $\cR_m$. We apply \Lem{wedge-via-collisions} to understand the
behavior of $Y$ and $Z$. Let $\beta' = \beta/5$ ($\beta$ is the parameter
in the original \Thm{algbit}).

\begin{claim} \label{clm:exp-y} $\E[Y] = \singlestor(\singlestor-1)W/m^2$. With probability $>1-\beta'$,
$|Y - \EX[Y]| \leq \beta' \EX[Y]$.

Analogously, 
$\E[Z] = \singlestor (\singlestor-1)T/m^2$. With probability $>1-\beta'$, $|Z - \EX[Z]| \leq (\beta' W/T) \EX[Z]$.
\end{claim}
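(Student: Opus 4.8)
The plan is to read off both statements as direct instantiations of \Lem{wedge-via-collisions}. The hypothesis of that lemma is met here because, as noted in the preliminaries, reservoir sampling guarantees that $\cR_m$ is a set of $\singlestor$ i.i.d.\ uniform random edges from $\{e_1,\dots,e_m\}$. For the statement about $Y$, I would apply the lemma with $\cS$ taken to be the \emph{full} set of wedges of $G_m$, so that $|\cS| = W$ and the lemma's count $X$ is exactly $Y = |\cW_m|$. Part~(1) then gives $\E[Y] = \binom{\singlestor}{2}(2W/m^2) = \singlestor(\singlestor-1)W/m^2$. For $Z$, I would instead take $\cS$ to be the set of future-closed wedges; by the observation preceding \Clm{future} this set has size exactly $T$, and the corresponding count is $Z = |\cC_m|$, so part~(1) yields $\E[Z] = \binom{\singlestor}{2}(2T/m^2) = \singlestor(\singlestor-1)T/m^2$. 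This disposes of the two expectation formulas immediately.

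For the concentration bounds I would invoke part~(2) of the lemma with its parameter $\gamma$ set to $\beta' = \beta/5$. In the $Y$ case $|\cS| = W$, so the conclusion $|X - \E[X]| \leq (\gamma W/|\cS|)\E[X]$ collapses to $|Y - \E[Y]| \leq \beta'\E[Y]$, holding with probability at least $1-\beta'$. In the $Z$ case $|\cS| = T$, so the identical conclusion reads $|Z - \E[Z]| \leq (\beta' W/T)\E[Z]$ with probability at least $1-\beta'$, which is precisely the asymmetric factor appearing in the claim. The slack $W/T \geq 1$ is exactly what the ``somewhat weaker concentration'' remark before the lemma refers to: future-closed wedges are a sparser target than all wedges, so the relative deviation we can guarantee is correspondingly larger.

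The one point requiring genuine care, and the step I expect to be the main obstacle, is verifying that the theorem's hypothesis on $\singlestor$ is strong enough to license both invocations of part~(2), whose precondition is $\singlestor \geq c'm/(\gamma^3\sqrt{W}) = c'm/(\beta'^3\sqrt{W})$ with $W$ (not $T$) in the denominator. To bridge the gap I would use that $\gcc = 3T/W \leq 1$, hence $T \leq W/3 \leq W$ and $m/\sqrt{W} \leq m/\sqrt{T}$. Since $\beta' = \beta/5$ gives $1/\beta'^3 = 125/\beta^3$, the required bound is $\singlestor \geq 125\,c'\,m/(\beta^3\sqrt{W})$, and the chain $\singlestor \geq c m/(\beta^3\sqrt{T}) \geq c m/(\beta^3\sqrt{W}) \geq 125\,c'\,m/(\beta^3\sqrt{W})$ holds as soon as the ``sufficiently large constant'' $c$ in \Thm{algbit} is chosen to satisfy $c \geq 125\,c'$. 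With that constant bookkeeping in place, both applications of the lemma are valid simultaneously and the claim follows.
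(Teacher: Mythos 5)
Your proposal is correct and follows essentially the same route as the paper's proof: both expectation formulas and both concentration bounds are read off from \Lem{wedge-via-collisions} with $\cS$ taken to be all wedges (for $Y$) and the future-closed wedges (for $Z$), with $\gamma = \beta'$ and the hypothesis $\singlestor \geq cm/(\beta^3\sqrt{T}) \geq cm/(\beta^3\sqrt{W}) \geq c'm/(\beta'^3\sqrt{W})$ verified via $T \leq W$. Your explicit constant bookkeeping ($c \geq 125\,c'$) is just a spelled-out version of the paper's ``for a large enough constant $c$'' remark.
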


\begin{proof} First, we deal with $Y$. In \Lem{wedge-via-collisions}, let the set $\cS$ 
be the entire set of wedges. The random variable $X$ of the lemma is exactly $Y$,
and the size of $\cR_m$ is $\stor$.
So $\E[Y] = {\singlestor \choose 2} (2W/m^2) = \singlestor(\singlestor-1)W/m^2$. We set $\gamma$ in the second
part of \Lem{wedge-via-collisions} to be $\beta'$. By the premise of \Thm{algbit},
$\singlestor \geq cm/(\beta^3 \sqrt{T}) \geq  cm/(\beta^3 \sqrt{W})$. Moreover, for a large enough constant $c$, the latter is at least $c'm/(\beta'^3 \sqrt{W})$. 
We can apply the second part of \Lem{wedge-via-collisions} to derive the
weak concentration of $Y$.

For $Z$, we apply $\cS$ \Lem{wedge-via-collisions} with the set of future-closed wedges. These are exactly $T$
in number. An argument identical to the one above completes the proof.
\qed
\end{proof}

This suffices to prove the second part of \Thm{algbit}. We multiply the inequality $|Y - \E[Y]| \leq \beta' \E[Y]$
by $m^2/\singlestor(\singlestor-1)$, and note that the estimate is $est = m^2 |\cW|/(\singlestor(\singlestor-1))$.
Hence, $|est - W| \leq \beta'W$ with probability $>1-\beta'$.

We have proven that $\E[Z]/\E[Y] = T/W$ and would like to argue this is almost
true for $Z/Y$. This is formalized in the next claim.

\begin{claim} \label{clm:e} Suppose $\cE$ is the following event: $\max(|Y - \EX[Y]|,|Z-\EX[Z]|) \leq \beta' \EX[Y]$.
Then, $|\EX[Z/Y| \cE] - \gcc/3| \leq 4\beta'$. Furthermore, $\Pr[\cE] > 1-2\beta'$.
\end{claim}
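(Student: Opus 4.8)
The plan is to reduce everything to Claim~\ref{clm:exp-y} and then carry out an elementary ratio estimate. First I would observe that the two deviation bounds in Claim~\ref{clm:exp-y} live on the \emph{same additive scale}: since $\E[Z] = \singlestor(\singlestor-1)T/m^2$ and $\E[Y] = \singlestor(\singlestor-1)W/m^2$, we have the identity $(\beta' W/T)\E[Z] = \beta'\E[Y]$. Hence the concentration statement for $Z$ can be rewritten as ``$|Z-\E[Z]| \le \beta'\E[Y]$ with probability $>1-\beta'$,'' matching exactly the form of the bound for $Y$. The event $\cE$ is then precisely the intersection of the two good events, so $\Pr[\cE] > 1-2\beta'$ follows immediately from a union bound over the two failure probabilities, each below $\beta'$.

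For the conditional-expectation bound, I would prove the stronger \emph{pointwise} statement that $|Z/Y - \gcc/3| \le 4\beta'$ holds deterministically on $\cE$, and then take conditional expectations. On $\cE$ we have $(1-\beta')\E[Y] \le Y \le (1+\beta')\E[Y]$ and $\E[Z]-\beta'\E[Y] \le Z \le \E[Z]+\beta'\E[Y]$; in particular $Y \ge (1-\beta')\E[Y] > 0$, so $Z/Y$ is well defined. Writing $r := \E[Z]/\E[Y] = T/W = \gcc/3$, combining the bounds on $Z$ with those on $Y$ gives
$$ \frac{r-\beta'}{1+\beta'} \;\le\; \frac{Z}{Y} \;\le\; \frac{r+\beta'}{1-\beta'}. $$
Subtracting $r$ from each side yields deviations of the form $\beta'(1+r)/(1\pm\beta')$, and since $0 \le r = \gcc/3 \le 1$ (because $3T \le W$) and $\beta' = \beta/5 \le 1/5$, both quantities are comfortably below $4\beta'$.

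Finally I would conclude by averaging: since $|Z/Y - \gcc/3| \le 4\beta'$ holds pointwise on $\cE$, the triangle inequality for conditional expectation gives
$$ \bigl|\EX[Z/Y \mid \cE] - \gcc/3\bigr| = \bigl|\EX[Z/Y - \gcc/3 \mid \cE]\bigr| \le \EX\bigl[\,|Z/Y - \gcc/3| \,\bigm|\, \cE\bigr] \le 4\beta', $$
which is the desired bound.

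The main obstacle (and the one genuinely load-bearing observation) is the scale-matching identity $(\beta' W/T)\E[Z] = \beta'\E[Y]$. Without it the two concentration bounds look incommensurable, and one is tempted to control $Z$ only \emph{relatively}, which blows up precisely when $T \ll W$, i.e.\ when $\gcc$ is small. Once both deviations are expressed as additive $\beta'\E[Y]$ errors, the remaining ratio analysis is routine; the only care needed is to keep $\beta' < 1$ so that the denominator $1-\beta'$ stays bounded away from $0$ and the division in the upper bound is legitimate.
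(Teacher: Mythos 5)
Your proof is correct and follows essentially the same route as the paper's: a union bound over the two deviation events from Claim~\ref{clm:exp-y}, a pointwise sandwich of $Z/Y$ between $(\EX[Z]\pm\beta'\EX[Y])/((1\mp\beta')\EX[Y])$ on the event $\cE$, and then averaging to pass to the conditional expectation. Your ``scale-matching identity'' $(\beta' W/T)\EX[Z] = \beta'\EX[Y]$ is exactly what the paper uses implicitly when it defines $\cE$ with both deviations measured against $\beta'\EX[Y]$, so this is a (welcome) clarification rather than a different argument.
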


\begin{proof} Since the deviation probabilities as given in \Clm{exp-y}
are at most $\beta'$, the union bound on probabilities implies $\Pr[\cE] >1 - 2\beta'$.

Since $\singlestor \geq cm/\beta^3 W$, by \Clm{exp-y}, $\EX[Y] \geq c'^2/\beta'^6$.
Hence, when $\cE$ happens, $Y > 0$. In other words, with probability at least $1-2\beta'$, the edges in $\cR_m$ will form a wedge.

Now look at $\EX[Z/Y| \cE]$.
When $\cE$ occurs, we can apply the bounds $|Y - \EX[Y]| \leq \EX[Y]$ and $|Z - \EX[Z]| \leq \EX[Z]$.
\begin{align*}
\frac{\EX[Z] - \beta' \EX[Y]}{(1+\beta')\EX[Y]} \leq \frac{Z}{Y} \leq \frac{\EX[Z] + \beta' \EX[Y]}{(1-\beta') \EX[Y]}
\end{align*}
We manipulate the upper bound with the following fact. For small enough $\beta'$, $1/(1-\beta') \leq 1+2\beta' \leq 2$.
Also, we use $\EX[Z]/\EX[Y] = T/W = \gcc/3$.
\begin{align*}
\frac{\EX[Z] + \beta' \EX[Y]}{(1-\beta') \EX[Y]} \leq (1+2\beta')\frac{\EX[Z]}{\EX[Y]} + 2\beta' = \gcc/3 + 4\beta'
\end{align*}
Using a similar calculation for the lower bound, when $\cE$ occurs,
$|Z/Y - \gcc/3| \leq 4\beta'$. Conditioned on $\cE$, $Z/Y \in [\gcc/3 - 4\beta', \gcc/3 + 4\beta']$,
implying $|\EX[Z/Y | \cE] - \gcc/3| \leq 4\beta'$. 
\end{proof}

We have a bound on $\EX[Z/Y | \cE]$, but we really care about $\EX[b_m]$.
The key is that conditioned on $Y > 0$, the expectation of $b_m$ is $Y/Z$,
and $Y > 0$ happens with large probability. We argue formally in \Clm{handle}
that $|\EX[b_m] - \EX[Z/Y | \cE] \leq \beta'$. Combined with \Clm{e},
we get $|\EX[b_m] - \gcc/3| \leq 5\beta' = \beta$, as desired.

\begin{claim}\label{clm:handle} $| \EX[b_m] - \EX[Z/Y | \cE] | \leq \beta'$.
\end{claim}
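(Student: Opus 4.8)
The plan is to evaluate $\EX[b_m]$ by conditioning first on the reservoir $\cR_m$ and then on the event $\cE$; the point is that once $\cR_m$ is fixed, the only remaining randomness in $b_m$ is the uniform choice of a wedge from $\cW_m$, so the conditional expectation of $b_m$ is a simple ratio. First I would observe that, conditioned on $\cR_m$ (which fixes both $Y=|\cW_m|$ and $Z=|\cC_m|$), the algorithm sets $b_m=1$ exactly when $Y>0$ and the sampled wedge lands in $\cC_m$; since a uniform wedge of $\cW_m$ lies in $\cC_m$ with probability $Z/Y$, the tower property yields $\EX[b_m]=\EX[\mathbf{1}[Y>0]\,Z/Y]$. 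I would also record the crucial bound $0\le Z/Y\le 1$ whenever $Y>0$, which holds because $\cC_m\subseteq\cW_m$ (so $Z\le Y$).

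Next I would use the fact already established inside the proof of \Clm{e} that $\cE$ forces $Y>0$: under $\cE$ we have $Y\ge(1-\beta')\EX[Y]$ while $\EX[Y]$ is bounded below by a large constant, so $Y>0$. Consequently, on $\cE$ the $Y=0$ contingency never triggers and $\EX[b_m\mid\cE]=\EX[Z/Y\mid\cE]$. With this identity in hand I would decompose $\EX[b_m]=\Pr[\cE]\,\EX[b_m\mid\cE]+\Pr[\neg\cE]\,\EX[b_m\mid\neg\cE]$ and subtract $\EX[Z/Y\mid\cE]$; using the identity, the difference collapses to $\EX[b_m]-\EX[Z/Y\mid\cE]=\Pr[\neg\cE]\,\big(\EX[b_m\mid\neg\cE]-\EX[Z/Y\mid\cE]\big)$. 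Both conditional expectations lie in $[0,1]$, so their difference is at most $1$ in magnitude, and the whole expression is bounded by $\Pr[\neg\cE]$, which \Clm{e} already controls.

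The only delicate part is the size of the final constant. The clean decomposition above gives $|\EX[b_m]-\EX[Z/Y\mid\cE]|\le\Pr[\neg\cE]$, and this inequality is essentially tight given the structure, so the claimed bound $\beta'$ really rests on making $\Pr[\neg\cE]\le\beta'$. Since $\cE$ is the intersection of two deviation events, pushing the union bound down to $\beta'$ requires applying the concentration part of \Lem{wedge-via-collisions} in \Clm{exp-y} with parameter $\beta'/2$ rather than $\beta'$ (which only costs a larger constant $c$ in the lower bound required on the reservoir size $\singlestor$). I expect this constant bookkeeping, rather than the probabilistic argument, to be the main thing to get right, since the structural identity $\EX[b_m\mid\cE]=\EX[Z/Y\mid\cE]$ does all of the real work.
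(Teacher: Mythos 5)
Your proof is correct and follows essentially the same route as the paper's: both reduce $\EX[b_m]$ to $\EX[\mathbf{1}[Y>0]\, Z/Y]$ via the tower property, split over $\cE$ and $\overline{\cE}$ using that $\cE$ forces $Y>0$ and that $Z/Y \le 1$, and bound the discrepancy by $\Pr[\overline{\cE}]$ (the paper routes this through the intermediate event $\cF = \{Y>0\}$ and Bayes' rule, whereas you collapse it into one exact identity, but the content is identical). Your closing remark about constants is in fact warranted: the paper's own proof silently uses $\Pr[\overline{\cE}] \le \beta'$ even though \Clm{e} only guarantees $\Pr[\overline{\cE}] < 2\beta'$, so either your fix (invoking the concentration bound in \Clm{exp-y} with parameter $\beta'/2$ at the cost of a larger constant $c$) or redefining $\beta' = \beta/6$ is needed for the stated bound to hold literally.
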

\begin{proof}
Let $\cF$ denote the event $Y > 0$. When $\cE$ holds, then $\cF$ also holds. Since {\algbit} outputs 0 when $\cF$ does not hold, we get $\EX[b_m | \overline{\cF} ] = 0$.   And since $\EX[b_m] = \EX[b_m | \cF] \Pr[\cF] + \EX[b_m | \overline{\cF} ] \Pr[\overline{\cF}]$, $\EX[b_m] =  \EX[b_m | \cF] \Pr[\cF] $. Further observe that $\EX[b_m | \cF]$ is exactly equal to $\EX[Z/Y | \cF]$.  Therefore, we get $\EX[b_m] = \EX[Z/Y | \cF] \Pr[\cF]$. By Bayes' rule,
\begin{align}
&\EX[b_m] = \EX[Z/Y | \cF] \Pr[\cF]  \notag\\
&= (\EX[Z/Y | \cF \cap \cE] \Pr[\cE | \cF] + \EX[Z/Y | \cF \cap \overline{\cE}] \Pr[\overline{\cE} | \cF] ) \cdot \Pr[\cF] \notag\\
&= \EX[Z/Y | \cF \cap \cE] \Pr[\cE \cap \cF] + \EX[Z/Y | \cF \cap \overline{\cE}] \Pr[\overline{\cE} \cap \cF] \notag\\
&= \EX[Z/Y | \cE] \Pr[\cE] + \EX[Z/Y | \cF \cap \overline{\cE}] \Pr[\overline{\cE} \cap \cF] \label{eqn:final}
\end{align}
The second last equality uses the fact that $\Pr[A \cap B] = \Pr[A | B] \cdot \Pr[B]$, while the last equality uses the fact that $\cF \cap \cE = \cE$ (since $\cE$ implies $\cF$). 
Note that $Z/Y \leq 1$. Thus, (\ref{eqn:final}) is at least $\EX[Z/Y | \cE] \Pr[\cE] \geq \EX[Z/Y | \cE]  (1 - \beta')
\geq \EX[Z/Y | \cE] - \beta'$. 
Moreover, (\ref{eqn:final}) is at most $\EX[Z/Y | \cE] + \Pr[\overline{\cE} \cap \cF]  \leq \EX[Z/Y | \cE]  + \Pr[\overline{\cE}] \leq \EX[Z/Y | \cE]  + \beta'$.
\end{proof}

\vspace{2ex}
{\bf Proof of \Lem{wedge-via-collisions}:}
The first part is an adaptation of the birthday paradox calculation.
Let the (multi)set $\cR = \{r_1, r_2, \ldots, r_{\stor}\}$.
We define random variables $X_{i,j}$ for each $i, j \in [\stor]$ with $i < j$.
Let $X_{i, j} = 1$ if the wedge $\set{r_i, r_j}$ belongs to $\cS$ and $0$ otherwise. Then $X = \sum_{i < j} X_{i,j}$. 

Since $\cR$ consists of uniform i.i.d. edges from $G$, the following holds: for every $i < j$ and every (unordered) pair of edges $\set{e_{\alpha}, e_{\beta}}$ from $E$, $\Pr[ \set{r_i, r_j}= \set{e_{\alpha}, e_{\beta}}] = 2/m^2$. This implies $\Pr[X_{i,j} = 1] = 2|\cS|/m^2$. By linearity of expectation and identical
distribution of all $X_{i,j}$s, $\E[X]$ $= {\singlestor \choose 2} \E[X_{1,2}]$ $= {\singlestor \choose 2} \Pr[X_{1,2} = 1]$ $= {\singlestor \choose 2} (2|\cS|/m^2)$, as required. 

The second part is obtained by applying the Chebyschev inequality. Let $Var[X]$ denote the variance of $X$.
For any $h > 0$,
\begin{align}
\displaystyle\Pr[|X - \E[X]| > h] \leq Var[X]/h^2 \label{eqn:chebyschev}
\end{align}

We need an upper bound on the variance of $X$ to apply \Eqn{chebyschev}. This is given
in \Lem{variance-bound}. Before proving the lemma, we use it to complete the main proof.
We set $h = (\gamma W/|\cS|)\EX[X]$. Note that $\EX[X] = \singlestor(\singlestor-1)|\cS|/m^2$,
so $h^2 = \gamma^2\stor^2(\stor-1)^2W^2/m^4 \geq \gamma^2\stor^4W^2/2m^4$.
By \Eqn{chebyschev}, $\Pr[|X - \E[X]| > h]$ is at most the following.
\begin{align*}
 \frac{Var[X]}{h^2} \leq \frac{18\stor^3 W^{3/2}/{m^3}}{\gamma^2\stor^4W^2/2m^4}
\leq \frac{36m/(\gamma^2 \sqrt{W})}{\stor} \leq \gamma
\end{align*}
where the final inequality holds $\singlestor \geq c'm/(\gamma^3\sqrt{W})$.

\begin{lemma}[Variance bound]\label{lem:variance-bound}
Assuming $W \geq m$ and $\singlestor \geq m/\sqrt{W}$, $$Var[X] \leq 18\stor^3 W^{3/2}/m^3.$$
\end{lemma}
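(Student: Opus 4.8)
The plan is to expand the variance of $X = \sum_{i<j} X_{i,j}$ directly as a double sum of covariances, $Var[X] = \sum_{p,q} Cov(X_p,X_q)$, where $p,q$ range over the $\binom{\stor}{2}$ index-pairs, and to classify the terms by how much the two pairs overlap. When the two index-pairs are disjoint, the corresponding indicators depend on four independent sample edges and are therefore independent, so their covariance vanishes; this kills the overwhelming majority of terms. Only two kinds of terms survive: the diagonal terms $p=q$, and the terms where $p$ and $q$ share exactly one index.

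First I would dispose of the diagonal. There are $\binom{\stor}{2}$ such terms, and each is at most $\E[X_{i,j}] = 2|\cS|/m^2 \leq 2W/m^2$, using $|\cS| \leq W$ together with $Var[X_{i,j}] \leq \E[X_{i,j}^2] = \E[X_{i,j}]$. This contributes at most $\stor^2 W/m^2$. Since the hypothesis $\stor \geq m/\sqrt{W}$ gives $\stor\sqrt{W}/m \geq 1$, this quantity is bounded by $\stor^3 W^{3/2}/m^3$; it is dominated by the target bound and can be folded in at the end.

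The heart of the argument is the shared-index terms. There are $\stor(\stor-1)(\stor-2) \leq \stor^3$ ordered such pairs, with $p=\{i,j\}$ and $q=\{i,l\}$, and by symmetry each has the same covariance, bounded by $\E[X_{\{i,j\}}X_{\{i,l\}}]$. I would compute this by conditioning on the shared edge $r_i = e$: given $r_i = e$, the events ``$\{e,r_j\}\in\cS$'' and ``$\{e,r_l\}\in\cS$'' are independent and each has probability $w(e)/m$, where $w(e)$ denotes the number of wedges of $\cS$ containing $e$. Averaging over $e$ gives $\E[X_{\{i,j\}}X_{\{i,l\}}] = m^{-3}\sum_e w(e)^2$, so the entire shared-index contribution is at most $(\stor^3/m^3)\sum_e w(e)^2$.

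The main obstacle is to show $\sum_e w(e)^2 = O(W^{3/2})$. For $e=(u,v)$, every wedge containing $e$ is centered at $u$ or at $v$, so $w(e) \leq d_u + d_v$, whence $\sum_e w(e)^2 \leq \sum_{(u,v)}(d_u+d_v)^2 \leq 2\sum_{(u,v)}(d_u^2+d_v^2) = 2\sum_v d_v^3$, since each vertex contributes $d_v^2$ once per incident edge. It remains to control $\sum_v d_v^3$. The key estimate is $\sum_v d_v^2 = 2W + 2m \leq 4W$ (from $2W = \sum_v d_v^2 - 2m$ and $W \geq m$), which forces $d_v \leq 2\sqrt{W}$ for every $v$; therefore $\sum_v d_v^3 \leq (\max_v d_v)\sum_v d_v^2 \leq 2\sqrt{W}\cdot 4W = 8W^{3/2}$. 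This yields $\sum_e w(e)^2 \leq 16W^{3/2}$, so the shared-index contribution is at most $16\stor^3 W^{3/2}/m^3$. Adding the diagonal bound of $\stor^3 W^{3/2}/m^3$ gives $Var[X] \leq 17\stor^3 W^{3/2}/m^3 \leq 18\stor^3 W^{3/2}/m^3$, as required. I expect the degree-cubed bound --- specifically the trick of writing $\sum_v d_v^3 \leq (\max_v d_v)\sum_v d_v^2$ and using $\sum_v d_v^2 \leq 4W$ to cap the maximum degree at $2\sqrt{W}$ --- to be the decisive step.
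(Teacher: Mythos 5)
Your proof is correct and takes essentially the same route as the paper's: decompose the variance by the overlap of the two index pairs, kill the disjoint-pair terms by independence, bound the shared-index terms by $2\sum_v d_v^3/m^3 \leq 16 W^{3/2}/m^3$ using $\sum_v d_v^2 \leq 4W$, and absorb the diagonal term using $\stor \geq m/\sqrt{W}$. The differences are only cosmetic: you phrase the expansion via covariances, compute the shared-index term exactly through the per-edge wedge counts $w(e)$ where the paper upper-bounds it by an edge-intersection probability, and you prove the cubic-sum bound via $\max_v d_v \leq 2\sqrt{W}$ rather than invoking the $\ell_3$-versus-$\ell_2$ norm inequality, which is the same estimate proved inline.
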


\begin{proof} We use the same notation as in the proof of \Lem{wedge-via-collisions}. For convenience,
we set $\mu = \EX[X_{i,j}]$, which is $2|\cS|/m^2$. By the definition of variance and linearity of expectation, 
\begin{align*}
&Var[X]= \E[X^2] - (\E[X])^2 \\
&= \E[\sum_{i < j}X_{i,j} \sum_{p < q} X_{p,q}] - \mbox{${\singlestor \choose 2}^2$} \mu^2 
= \sum_{i < j, p < q}\E[X_{i,j} X_{p,q}] - \mbox{${\singlestor \choose 2}^2$} \mu^2.
\end{align*}
The summation is split as follows. 
\begin{align*}
\sum_{i< j, p < q}\E[X_{i,j} X_{p,q}] =
\sum_{i<j} \E[X^2_{i,j}] + \sum_{\substack{i < j, p < q \\ |\{i,j\} \cap \{p,q\}| = 1}} \E[X_{i,j} X_{p,q}] + \sum_{\substack{i < j, p < q \\ \{i,j\} \cap \{p,q\} = \emptyset}} \E[X_{i,j} X_{p,q}] \\
\end{align*}
We deal with each of these terms separately. For convenience, we refer to the terms (in order) as $A_1, A_2$, and $A_3$.
We first list the upper bounds for each of these terms and derive the final bound on $Var[X]$. 
\begin{asparaitem}
	\item $A_1 = {\singlestor \choose 2} \mu$.
	\item $A_2 \leq 12{\singlestor \choose 3} \sum_{v \in [n]} d^3_v/m^3$.
	\item $A_3 = 6{\singlestor \choose 4} \mu^2$.
\end{asparaitem}
We shall prove these shortly. From these, we directly bound $Var[X]$.
\begin{align*}
	Var[X] & = A_1 + A_2 + A_3 - {\singlestor \choose 2}^2 \mu^2 \\
	& \leq {\singlestor \choose 2} \mu + 12{\singlestor \choose 3} \sum_{v \in [n]} d^3_v/m^3 + 6{\singlestor \choose 4} \mu^2 - {\singlestor \choose 2}^2 \mu^2 
\end{align*}
Note that $6{\singlestor \choose 4} = \singlestor(\singlestor-1)(\stor-2)(\stor-3)/4 \leq [\singlestor(\singlestor-1)/2]^2$. Since the $\ell_3$-norm is less
than the $\ell_2$-norm, $\sum_v d^3_v \leq (\sum_v d^2_v)^{3/2}$. Since $W \geq m$,
we have $\sum_v d^2_v = 2 \sum_v {d_v \choose 2} + 2m \leq 4W$.
Plugging these bounds in (and using gross
upper bounds to ignore constants),
\begin{align*}
	Var[X] & \leq \stor^2 \mu + 2\stor^3 \frac{(\sum_v d^2_v)^{3/2}}{m^3}\\
	& \leq \frac{2 \stor^2 |\cS|}{m^2} + \frac{16 \stor^3 W^{3/2}}{m^3} \\
	& \leq \frac{2 \stor^2 W}{m^2} + \frac{16 \stor^3 W^{3/2}}{m^3} \leq \frac{18 \stor^3 W^{3/2}}{m^3}
\end{align*}
The final step uses the fact that $\singlestor \geq m/\sqrt{W}$.
\end{proof}

We now bound the terms $A_1$, $A_2$, and $A_3$ in three separate claims.

\begin{claim} \label{clm:A1} $\sum_{i < j} \EX[X^2_{i,j}] = {\singlestor \choose 2} \mu$.
\end{claim}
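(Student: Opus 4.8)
The plan is to exploit the fact that each $X_{i,j}$ is a Bernoulli (indicator) random variable, so that squaring it does nothing. By construction $X_{i,j}\in\set{0,1}$ (it equals $1$ exactly when the wedge $\set{r_i,r_j}$ lies in $\cS$), and any $\set{0,1}$-valued variable satisfies the idempotence $X_{i,j}^2 = X_{i,j}$. Hence $\EX[X_{i,j}^2] = \EX[X_{i,j}]$, and the problem collapses to the first-moment computation already carried out in the proof of \Lem{wedge-via-collisions}.

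First I would record $X_{i,j}^2 = X_{i,j}$ and conclude $\EX[X_{i,j}^2] = \Pr[X_{i,j}=1] = \mu$, where $\mu = 2|\cS|/m^2$ is the common value established earlier. Then I would sum over all index pairs: the summation ranges over $i<j$ with $i,j\in[\singlestor]$, of which there are exactly $\binom{\singlestor}{2}$. Since every term equals the same constant $\mu$, the sum is $\binom{\singlestor}{2}\mu$, which is precisely the claimed bound on $A_1$.

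There is essentially no obstacle here; this claim is just the bookkeeping for the diagonal terms of the variance expansion. The only point worth stating carefully is that the identical distribution of the $X_{i,j}$ (a consequence of $\cR$ being a set of i.i.d.\ uniform edges) lets us replace each $\EX[X_{i,j}^2]$ by the single value $\mu$ without analyzing each pair separately, so that the count $\binom{\singlestor}{2}$ of such pairs immediately yields the result.
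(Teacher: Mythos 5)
Your proposal is correct and matches the paper's proof exactly: both arguments rest on the idempotence $X_{i,j}^2 = X_{i,j}$ for indicator variables, so that $\EX[X_{i,j}^2] = \EX[X_{i,j}] = \mu$, and then sum the identical terms over the ${\singlestor \choose 2}$ pairs. The paper states this in a single line; your additional remark about identical distribution justifying the common value $\mu$ is implicit there and changes nothing substantive.
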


\begin{proof} Since $X_{i,j}$ only takes the values $0$ and $1$, $\EX[X^2_{i,j}] = \EX[X_{i,j}] = \mu$.
\end{proof}

\begin{claim} \label{clm:A2} $$\sum_{\substack{i < j, p < q \\ |\{i,j\} \cap \{p,q\}| = 1}} \E[X_{i,j} X_{p,q}] \leq 12{\singlestor \choose 3} \sum_{v \in [n]} d^3_v/m^3.$$
\end{claim}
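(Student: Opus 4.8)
The plan is to collapse the double sum to a sum over unordered triples of indices, bound a single representative summand by conditioning on the shared edge, and then reindex the resulting degree expression.

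First I would count the terms. A pair of index-pairs $\set{i,j},\set{p,q}$ with $i<j$, $p<q$ and $|\set{i,j}\cap\set{p,q}|=1$ is specified by a set of three distinct indices together with a choice of two of its three $2$-subsets; since any two distinct $2$-subsets of a $3$-element set overlap in exactly one element, the number of such ordered pairs is exactly $6\binom{\singlestor}{3}$. Because the $r_i$ are i.i.d., every summand has the same value, so it suffices to bound $\E[X_{1,2}X_{2,3}]$, the representative in which index $2$ is shared (i.e.\ $r_2$ is the common edge of the two wedges).

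Next I would use independence. Conditioning on $r_2=e$ with $e=(u,v)$, the events $\set{r_1,e}\in\cS$ and $\set{e,r_3}\in\cS$ depend on the disjoint variables $r_1$ and $r_3$ and hence are independent, so $\E[X_{1,2}X_{2,3}\mid r_2=e]=\Pr[\set{r_1,e}\in\cS]^2$. The crucial inequality is that, since $\cS$ is a set of wedges, $\set{r_1,e}\in\cS$ forces $r_1$ to share a vertex with $e$; the number of edges incident to $u$ or $v$ (other than $e$) is at most $(d_u-1)+(d_v-1)\le d_u+d_v$, so $\Pr[\set{r_1,e}\in\cS]\le (d_u+d_v)/m$. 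Averaging over $e$ (each selected with probability $1/m$) gives
\[
\E[X_{1,2}X_{2,3}] \le \frac{1}{m^3}\sum_{(u,v)\in E}(d_u+d_v)^2 .
\]

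Finally I would estimate the degree sum via $(d_u+d_v)^2\le 2(d_u^2+d_v^2)$ together with the edge-to-vertex reindexing $\sum_{(u,v)\in E}(d_u^2+d_v^2)=\sum_{v\in[n]}d_v^3$ (each vertex $v$ is an endpoint of $d_v$ edges and contributes $d_v^2$ to each), yielding $\E[X_{1,2}X_{2,3}]\le 2\sum_{v\in[n]}d_v^3/m^3$. Multiplying by the $6\binom{\singlestor}{3}$ identical summands produces the claimed bound $12\binom{\singlestor}{3}\sum_{v\in[n]}d_v^3/m^3$. The one step that requires care is the conditioning bound: the point is to relax $\cS$-membership to the weaker wedge-incidence condition and count incident edges correctly, as this is precisely what converts the probability into the degree expression feeding \Lem{variance-bound}; everything else is bookkeeping.
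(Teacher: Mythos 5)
Your proof is correct and takes essentially the same route as the paper's: the same count of $6{\singlestor \choose 3}$ index configurations, the same relaxation of $\cS$-membership to mere incidence with the shared edge (conditioning on that edge and using independence of the other two draws), and the same steps $(d_u+d_v)^2 \leq 2(d_u^2+d_v^2)$ and $\sum_{(u,v) \in E}(d_u^2+d_v^2) = \sum_{v} d_v^3$. The only cosmetic difference is that the paper takes the shared edge to be $r_1$ and bounds the joint incidence event $\Pr[r_1 \cap r_2 \neq \emptyset \wedge r_1 \cap r_3 \neq \emptyset]$ directly, whereas you take the shared edge to be $r_2$ and square a per-factor bound; these are the same computation.
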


\begin{proof} How many terms are in the summation? There are 3 distinct indices in $\{i,j,p,q\}$. For each distinct triple
of indices, there are 6 possible way of choosing $i < j, p < q$ among these indices such that $ |\{i,j\} \cap \{p,q\}| = 1$.
This gives $6{\singlestor \choose 3}$ terms. By symmetry, each term in the summation is equal to $\EX[X_{1,2} X_{1,3}]$.
This is exactly the probability that $\{r_1, r_2\} \in \cS$ and $\{r_1, r_3\} \in \cS$.
We bound this probability above by $2 \sum_{v \in [n]} d_v^3/m^3$, completing the proof.

Let $\cE$ be the event  $\set{r_1, r_2} \in S$ and  $\set{r_1, r_3} \in S$. Let $\cF$ be the event that edge $r_1$ intersects  edges $r_2$ and $r_3$. 
Observe that $\cE$ implies $\cF$. Therefore, it suffices to bound the probability of the latter event. We also use the inequality $\forall a,b, (a+b)^2 \leq 2(a^2 + b^2)$.
Also, note that the number of edges intersecting any edge $(u,v)$ is exactly $d_u + d_v - 1$.
\begin{align*}
&\Pr_{r_1,r_2, r_3}[\set{r_1, r_2} \in S, \set{r_1, r_3} \in S] \\
&\leq \Pr_{r_1,r_2, r_3}[ r_1 \cap r_2 \neq \emptyset \wedge r_1 \cap r_3 \neq \emptyset]\\
&= \sum_{(u,v) \in E}\Pr_{r_2, r_3}[\set{u,v} \cap r_2 \neq \emptyset \wedge \set{u,v} \cap r_3 \neq \emptyset | r_1 = (u,v)] \cdot \frac{1}{m}\\
&= \sum_{(u,v) \in E} \frac{ (d_u + d_v -1)^2}{m^2} \cdot \frac{1}{m}
\leq \frac{1}{m^3} \sum_{(u,v) \in E} { (d_u + d_v)^2} \\
&\leq \frac{2}{m^3} \sum_{(u,v) \in E} (d^2_u + d^2_v) = \frac{2 \sum_{v \in [n]} d_v^3}{m^3}
\end{align*}
For the final equality, consider the number of terms in the summation where $d^2_u$ appears. This is the number of edges $(u,v)$ (over all $v$), which
is exactly $d_u$.
\end{proof}

\begin{claim} \label{clm:A3} 
$$ \sum_{\substack{i < j, p < q \\ \{i,j\} \cap \{p,q\} = \emptyset}} \E[X_{i,j} X_{p,q}] = 6{\singlestor \choose 4} \mu^2 $$
\end{claim}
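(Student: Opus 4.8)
The plan is to exploit the fact that when the four indices $i,j,p,q$ are all distinct, the indicators $X_{i,j}$ and $X_{p,q}$ depend on \emph{disjoint} collections of the i.i.d.\ random edges $r_1,\ldots,r_{\singlestor}$, and are therefore independent. Concretely, $X_{i,j}$ is a function of $(r_i,r_j)$ alone while $X_{p,q}$ is a function of $(r_p,r_q)$ alone; since $\{i,j\}\cap\{p,q\}=\emptyset$ these two pairs involve no common reservoir edge, and independence of the $r_k$ yields $\E[X_{i,j}X_{p,q}] = \E[X_{i,j}]\,\E[X_{p,q}] = \mu^2$. Thus every summand equals $\mu^2$, and the claim collapses to a purely combinatorial statement about the number of terms.

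The second step is the counting, parallel to the $A_2$ computation but now for four distinct indices. First choose the quadruple of indices, which can be done in ${\singlestor \choose 4}$ ways. Given four fixed indices, I count the ordered pairs of disjoint index-pairs $(\{i,j\},\{p,q\})$ with $i<j$ and $p<q$: a set of four elements splits into two unordered pairs in exactly $3$ ways, and each split gives $2$ ordered versions (choosing which pair plays the role of $\{i,j\}$ in the double sum), for $6$ terms per quadruple. Multiplying, the summation contains $6{\singlestor \choose 4}$ terms, each equal to $\mu^2$, so $A_3 = 6{\singlestor \choose 4}\mu^2$, as claimed.

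I expect the only point requiring care to be the multiplicity of $6$: one must check that the conventions $i<j$ and $p<q$ pin down the order \emph{within} each pair while leaving the two pairs themselves distinguishable as ordered factors in the double sum, so that there is neither over- nor under-counting. The independence step, by contrast, is immediate from the i.i.d.\ structure of the reservoir and uses no bound on $\cS$ or on the degree sequence; unlike $A_2$, this term is an exact variance-decomposition bookkeeping quantity rather than where any genuine estimation happens.
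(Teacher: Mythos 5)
Your proof is correct and follows essentially the same route as the paper: independence of $X_{i,j}$ and $X_{p,q}$ from the disjointness of $\{i,j\}$ and $\{p,q\}$ together with the i.i.d.\ draws of the reservoir edges, and the count of $6{\singlestor \choose 4}$ terms. Your counting is in fact slightly more explicit than the paper's (you justify the factor $6$ as $3$ unordered splits times $2$ ordered assignments, where the paper simply states ``$6$ different orderings''), but the argument is the same.
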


\begin{proof} There are $6{\singlestor \choose 4}$ terms in the summation (${\singlestor \choose 4}$ ways of choose $\{i,j,p,q\}$
and 6 different orderings). Note that $X_{i,j}$ and $X_{p,q}$ are independent, \emph{regardless} of the structure
of $G$ or the set of wedges $\cS$. This is because $\Pr[X_{i,j}=1 | X_{p,q} = 0] = \Pr[X_{i,j}=1 | X_{p,q} = 1] = \Pr[X_{i,j}=1]$.
In other words, the outcomes of the random edges $r_p$ and $r_q$ do not affect the edges $r_i, r_j$ (by independence
of these draws) and hence cannot affect the random variable $X_{i,j}$. Thus, $\E[X_{i,j} X_{p,q}] = \EX[X_{i,j}] \EX[X_{p,q}]$ $=\mu^2$.
\end{proof}

\subsection{Circumventing problems with \large{\algbit{}}} \label{sec:circum}

In this section we will discuss two problems that limit the practicality of \algbit{} and how \algstream{} circumvents these problems with heuristics.

\Thm{algbit} immediately gives a small sublinear space streaming algorithm for estimating $\gcc$. The output of
\algbit{} has almost the exact expectation. We can run many independent invocations of \algbit{} 
and take the fraction of $1$s to estimate $\EX[b_m]$ (which is close to $\gcc/3$).
A Chernoff bound tells us that $O(1/\epsilon^2)$ invocations suffice to estimate $\EX[b_m]$
within an additive error of $\epsilon$. The total space required by the algorithm becomes $O(m/(\sqrt{T}\epsilon^2))$, which can be very expensive in practice.
Even though $m/\sqrt{T}$ is not large, for the reasonable value of $\epsilon = 0.01$,
the storage cost blows up by a factor of $10^4$. This is the standard method used in previous work for streaming
triangle counts.

This blowup is avoided in \algstream{} by \emph{reusing} {\em the same reservoir of edges} for sampling wedges.
Note that \algbit{} is trying to generate a single uniform random wedge from $G$, and we use independent
reservoirs of edges to generate multiple samples. \Lem{wedge-via-collisions} says that for a reservoir
of $km/\sqrt{W}$ edges, we expect $k^2$ wedges. So, if $k > 1/\epsilon$ and we get $> 1/\epsilon^2$ wedges.
Since the reservoir contains a large set of wedges, we could just use a subset of these for estimating $\EX[b_m]$.
Unfortunately, these wedges are correlated with each other, and we cannot theoretically prove the desired concentration.
In practice,  the algorithm generates so many wedges that  downsampling these wedges for the wedge reservoir leads to a sufficiently uncorrelated sample, and  we get excellent results
by reusing the wedge reservoir. This is an important distinction all other streaming work~\cite{JoGh05,BuFrLeMaSo06,PavanVLDB2013}.
We can multiply our space by $1/\epsilon$ to (heuristically) get error $\epsilon$, but this is not possible
through previous algorithms. Their space is multiplied by $1/\epsilon^2$.

The second issue is that \algbit{} requires a fair bit of bookkeeping. We need to generate a random wedge from the large set $\cW_t$, the set of wedges formed by the current edge reservoir.
While this is possible by storing 
\edgeres{} as a subgraph, we have a nice (at least in the authors' opinion) heuristic fix that avoids these complications.

Suppose we have a uniform random wedge $w \in \cW_{t-1}$. We can convert it to an ``almost" uniform random wedge
in $\cW_t$. If $\cW_{t} = \cW_{t-1}$  (thus $\cN_t = \emptyset$ which is true most of the time), then $w$ is also uniform in $\cW_t$.
Suppose not. Note that $\cW_t$ is constructed by removing some wedges from $\cW_{t-1}$ and inserting $\cN_t$.
Since $w$ is uniform random in $\cW_{t-1}$, \emph{if $w$ is also present in $\cW_{t}$}, 
then it is uniform random in $\cW_{t} \setminus \cN_t$.
Replacing $w$ by a uniform random wedge in $\cN_t$ with probability $|\cN_t|/|\cW_t|$ yields
a uniform random wedge in $\cW_t$.
This is precisely what \algstream{} does.

When $w \notin \cW_t$, then the edge replaced by $e_t$ must be in $w$.
We approximate this as a low probability event and simply ignore this case. 
Hence, in \algstream{}, we simply assume that $w$ is always in $\cW_{t}$. This is technically
incorrect, but it appears to have little effect on the accuracy in practice.
And it leads to a cleaner, efficient implementation.

\section{Experimental Results} \label{sec:experiments}

We implemented our algorithm in C++ and ran our experiments on a MacBook Pro laptop equipped with a 2.8GHz Intel core i7 processor and 8GB memory. 

{\bf Predictions on various graphs:} We run \algstream{} on a variety of graphs obtained from the SNAP database \cite{Snap}. 
The vital statistics of all the graphs are provided in \Tab{gcc}. 
We simply set the edge reservoir size $\stor_e$ as 20K and wedge reservoir size $\stor_w$ as 20K for all our runs. Each graph is converted into
a stream by taking a random ordering of the edges.
In \Fig{cc-triangles}, we show
our results for estimating both the transitivity, $\gcc$ and  triangle count, $T$. The absolute values are plotted for $\gcc$ together with the true values.
For the triangle counts, we plot the relative error (so $|est - T|/T$, where $est$ is the algorithm output) for each graph, since
the true values can vary over orders of magnitude. 
Observe that the transitivity estimates are very accurate. The relative
error for $T$ is mostly below 8\%, and often below 4\%.

All the graphs listed have millions of edges, so our storage
is always 2 orders of magnitude smaller than the size of  graph. Most dramatically, we get accurate results on the {\tt Orkut} social network,
which has 220M edges. \emph{The algorithm stores only 40K edges, a 0.0001-fraction
of the graph.} Also observe the results on the {\tt Flickr} and {\tt Livejournal} graphs, which also run
into tens of millions of edges.
\begin{table}[t]	
 \caption{ Properties of the graphs used in the experiments}
\label{tab:gcc}
  \centering\small
  \begin{tabular}{|>{\tt}r@{\,}|*{6}{r@{\,}|}}
    \hline
            Graph &  $n$ &  $m$  &  $W$ & $T$ & $\gcc$ \\
    \hline
                          amazon0312 &   401K &  2350K &    69M &  3686K &	0.160  \\
              amazon0505 &   410K &  2439K &    73M &  3951K &  	0.162\\
             amazon0601 &   403K &  2443K &    72M &  3987K &  0.166\\
             as-skitter &  1696K & 11095K & 16022M & 28770K & 0.005	\\
                        cit-Patents &  3775K & 16519K &   336M &  7515K & 0.067\\
            	             roadNet-CA &  1965K &  2767K &     6M &   121K & 0.060	\\
          	           web-BerkStan &   685K &  6649K & 27983M & 64691K & 0.007\\
             web-Google &   876K &  4322K &   727M & 13392K & 0.055 \\
           web-Stanford &   282K &  1993K &  3944M & 11329K & 0.009\\
              wiki-Talk &  2394K &  4660K & 12594M &  9204K &  0.002 \\
              youtube & 1158K&  2990K& 1474M& 3057K& 0.006 \\
              flickr & 1861K& 15555K  &14670M & 548659K &   0.112 \\
              livejournal & 5284K & 48710K& 7519M& 310877K& 0.124 \\
              orkut & 3073K& 	223534K & 45625M	&627584K & 0.041 \\	
    \hline
 \end{tabular}
\end{table}

\begin{figure*}[tb]
  \centering
    \subfloat[{Transitivity}]{\begin{tikzpicture}\draw [fill=\typea,thin] (0.100,0) rectangle (0.220,1.926) ;
\draw [fill=\typeb,thin] (0.230,0) node[below]{\rotatebox[origin=t]{90}{\scriptsize amazon0312}} rectangle (0.350,1.951) ;
\draw [fill=\typea,thin] (0.460,0) rectangle (0.580,1.948) ;
\draw [fill=\typeb,thin] (0.590,0) node[below]{\rotatebox[origin=t]{90}{\scriptsize amazon0505}} rectangle (0.710,1.926) ;
\draw [fill=\typea,thin] (0.820,0) rectangle (0.940,1.987) ;
\draw [fill=\typeb,thin] (0.950,0) node[below]{\rotatebox[origin=t]{90}{\scriptsize amazon0601}} rectangle (1.070,1.991) ;
\draw [fill=\typea,thin] (1.180,0) rectangle (1.300,0.065) ;
\draw [fill=\typeb,thin] (1.310,0) node[below]{\rotatebox[origin=t]{90}{\scriptsize as-skitter}} rectangle (1.430,0.054) ;
\draw [fill=\typea,thin] (1.540,0) rectangle (1.660,0.806) ;
\draw [fill=\typeb,thin] (1.670,0) node[below]{\rotatebox[origin=t]{90}{\scriptsize cit-Patents}} rectangle (1.790,1.058) ;
\draw [fill=\typea,thin] (1.900,0) rectangle (2.020,0.725) ;
\draw [fill=\typeb,thin] (2.030,0) node[below]{\rotatebox[origin=t]{90}{\scriptsize roadNet-CA}} rectangle (2.150,0.824) ;
\draw [fill=\typea,thin] (2.260,0) rectangle (2.380,0.083) ;
\draw [fill=\typeb,thin] (2.390,0) node[below]{\rotatebox[origin=t]{90}{\scriptsize web-BerkStan}} rectangle (2.510,0.072) ;
\draw [fill=\typea,thin] (2.620,0) rectangle (2.740,0.663) ;
\draw [fill=\typeb,thin] (2.750,0) node[below]{\rotatebox[origin=t]{90}{\scriptsize web-Google}} rectangle (2.870,0.623) ;
\draw [fill=\typea,thin] (2.980,0) rectangle (3.100,0.103) ;
\draw [fill=\typeb,thin] (3.110,0) node[below]{\rotatebox[origin=t]{90}{\scriptsize web-Stanford}} rectangle (3.230,0.108) ;
\draw [fill=\typea,thin] (3.340,0) rectangle (3.460,0.026) ;
\draw [fill=\typeb,thin] (3.470,0) node[below]{\rotatebox[origin=t]{90}{\scriptsize wiki-Talk}} rectangle (3.590,0.025) ;
\draw [fill=\typea,thin] (3.700,0) rectangle (3.820,0.075) ;
\draw [fill=\typeb,thin] (3.830,0) node[below]{\rotatebox[origin=t]{90}{\scriptsize youtube}} rectangle (3.950,0.083) ;
\draw [fill=\typea,thin] (4.060,0) rectangle (4.180,1.346) ;
\draw [fill=\typeb,thin] (4.190,0) node[below]{\rotatebox[origin=t]{90}{\scriptsize flickr}} rectangle (4.310,1.444) ;
\draw [fill=\typea,thin] (4.420,0) rectangle (4.540,1.488) ;
\draw [fill=\typeb,thin] (4.550,0) node[below]{\rotatebox[origin=t]{90}{\scriptsize livejournal}} rectangle (4.670,1.606) ;
\draw [fill=\typea,thin] (4.780,0) rectangle (4.900,0.495) ;
\draw [fill=\typeb,thin] (4.910,0) node[below]{\rotatebox[origin=t]{90}{\scriptsize orkut}} rectangle (5.030,0.511) ;
\draw [<->, thick] (5.34,0) -- (0,0)-- (0,0.04) node[left] at (-0.6, 1.20) {\rotatebox{90}{\small Transitivity}} -- (0, 2.40) ;
\draw [dashed] (0, 0.480) node [left]{\scriptsize $0.04$} -- (5.360,0.480);
\draw [dashed] (0, 0.960) node [left]{\scriptsize$0.08$} -- (5.360,0.960);
\draw [dashed] (0, 1.440) node [left]{\scriptsize$0.12$} -- (5.360,1.440);
\draw [dashed] (0, 1.920) node [left]{\scriptsize$0.16$} -- (5.360,1.920);
\draw [fill=\typea,thin] (0.93,2.64) node[right] at(1.29,2.76) {\small Exact} rectangle (1.17,2.88)  ;
\draw [fill=\typeb,thin] (2.34,2.64) node[right]at(2.70,2.76) {\small Estimate} rectangle (2.58,2.88)  ;
\end{tikzpicture}}\hspace{4ex}
     \subfloat[{Triangles}]{\begin{tikzpicture}\draw [fill=\typea,thin] (0.100,0) node[below] at (0.200,-0.1) {\rotatebox[origin=t]{90}{\scriptsize amazon0312}} rectangle (0.300,0.400) ;
\draw [fill=\typea,thin] (0.400,0) node[below] at (0.500,-0.1) {\rotatebox[origin=t]{90}{\scriptsize amazon0505}} rectangle (0.600,1.200) ;
\draw [fill=\typea,thin] (0.700,0) node[below] at (0.800,-0.1) {\rotatebox[origin=t]{90}{\scriptsize amazon0601}} rectangle (0.900,1.000) ;
\draw [fill=\typea,thin] (1.000,0) node[below] at (1.100,-0.1) {\rotatebox[origin=t]{90}{\scriptsize as-skitter}} rectangle (1.200,1.400) ;
\draw [fill=\typea,thin] (1.300,0) node[below] at (1.400,-0.1) {\rotatebox[origin=t]{90}{\scriptsize cit-Patents}} rectangle (1.500,0.800) ;
\draw [fill=\typea,thin] (1.600,0) node[below] at (1.700,-0.1) {\rotatebox[origin=t]{90}{\scriptsize roadNet-CA}} rectangle (1.800,1.400) ;
\draw [fill=\typea,thin] (1.900,0) node[below] at (2.000,-0.1) {\rotatebox[origin=t]{90}{\scriptsize web-BerkStan}} rectangle (2.100,2.400) ;
\draw [fill=\typea,thin] (2.200,0) node[below] at (2.300,-0.1) {\rotatebox[origin=t]{90}{\scriptsize web-Google}} rectangle (2.400,0.800) ;
\draw [fill=\typea,thin] (2.500,0) node[below] at (2.600,-0.1) {\rotatebox[origin=t]{90}{\scriptsize web-Stanford}} rectangle (2.700,1.400) ;
\draw [fill=\typea,thin] (2.800,0) node[below] at (2.900,-0.1) {\rotatebox[origin=t]{90}{\scriptsize wiki-Talk}} rectangle (3.000,0.400) ;
\draw [fill=\typea,thin] (3.100,0) node[below] at (3.200,-0.1) {\rotatebox[origin=t]{90}{\scriptsize youtube}} rectangle (3.300,0.600) ;
\draw [fill=\typea,thin] (3.400,0) node[below] at (3.500,-0.1) {\rotatebox[origin=t]{90}{\scriptsize flickr}} rectangle (3.600,1.200) ;
\draw [fill=\typea,thin] (3.700,0) node[below] at (3.800,-0.1) {\rotatebox[origin=t]{90}{\scriptsize livejournal}} rectangle (3.900,1.000) ;
\draw [fill=\typea,thin] (4.000,0) node[below] at (4.100,-0.1) {\rotatebox[origin=t]{90}{\scriptsize orkut}} rectangle (4.200,1.000) ;
\draw [<->, thick] (4.50,0) -- (0,0)-- (0,0.06) node[left] at (-0.6, 1.60) {\rotatebox{90}{\small Relative error}} -- (0, 3.00) ;
\draw [dashed] (0, 0.600) node [left]{\scriptsize $0.03$} -- (4.500,0.600);
\draw [dashed] (0, 1.200) node [left]{\scriptsize$0.06$} -- (4.500,1.200);
\draw [dashed] (0, 1.800) node [left]{\scriptsize$0.09$} -- (4.500,1.800);
\draw [dashed] (0, 2.400) node [left]{\scriptsize$0.12$} -- (4.500,2.400);
\end{tikzpicture}}
   \caption{Output of a single run of \algstream{} on a variety of real datasets with 20K edge reservoir and 20K wedge reservoir. The plot on the left gives the estimated transitivity values (labelled streaming) alongside their exact values. The plot on the right gives the relative error of \algstream's estimate on triangles $T$. Observe that the relative error for $T$ is mostly below $8\%$, and often below $4\%$.}
  \label{fig:cc-triangles}
\end{figure*}
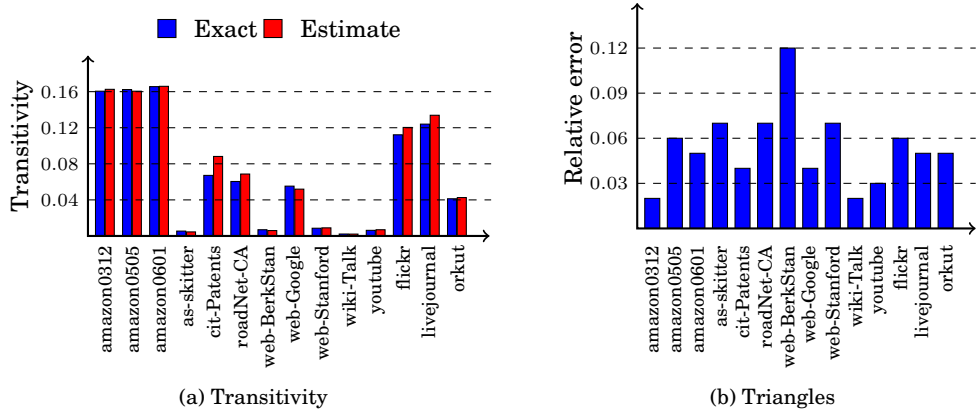

{\bf Real-time tracking:} A benefit of \algstream{} is that it can maintain a real-time estimate of $\gcc_t$ and $T_t$.
We take a real-world temporal graph, {\tt cit-Patents}, which contains patent citation data over a 40 year period.  The vertices of this graph are the patents and the edges  correspond to the citations. The edges
are time stamped with the year of citation and hence give a stream of edges. Using an edge reservoir
of 50K and wedge reservoir of 50K, we accurately track these values over time (refer to \Fig{tracking}). 
Note that this is still orders of magnitude smaller than the full size of the graph, which is 16M edges. 
The figure only shows the true values and the estimates  for the year ends.
As the figure shows the estimates are consistently accurate over time.
\medskip

{\bf Convergence of our estimate:} 
We demonstrate that our algorithm converges to the true value as we increase the space. We run our algorithm on amazon0505 graph by 
increasing the space ($\stor_e + \stor_w$) available to the algorithm. For convenience, we keep the size of edge reservoir and wedge reservoir the same. 
In \Fig{se-equals-sw-concentration}, estimates for transitivity and triangles rapidly converge to the true value. Accuracy increases   with more storage for unto 10,000 edges, but after that stabilizes.
We get similar results for other graphs, but do not provide all details for brevity.

\begin{figure*}
  \centering
  \subfloat[Convergence of the transitivity estimate]{\includegraphics[scale=0.35]{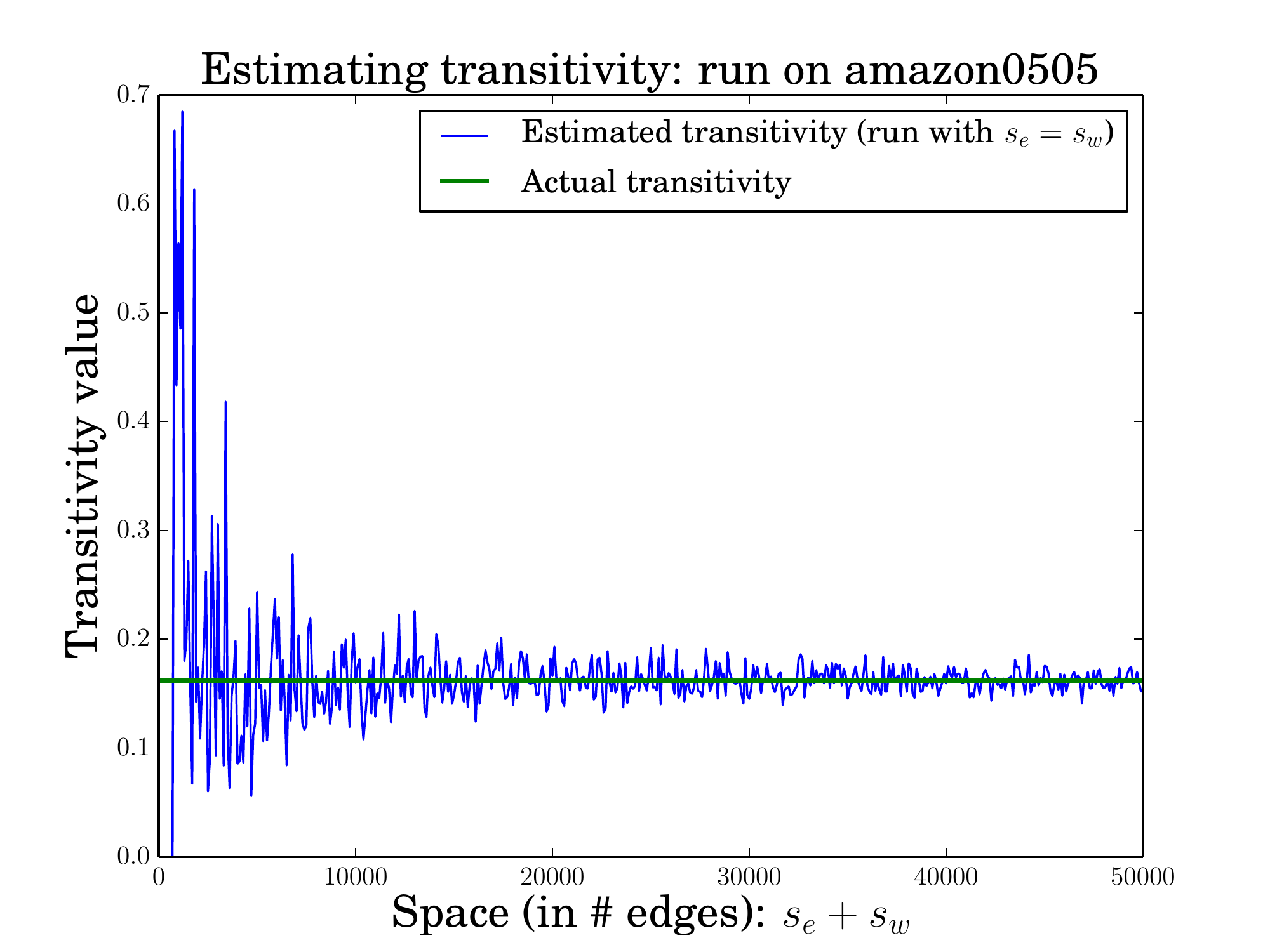}}
  \subfloat[Convergence of triangles estimate]{\includegraphics[scale=0.35]{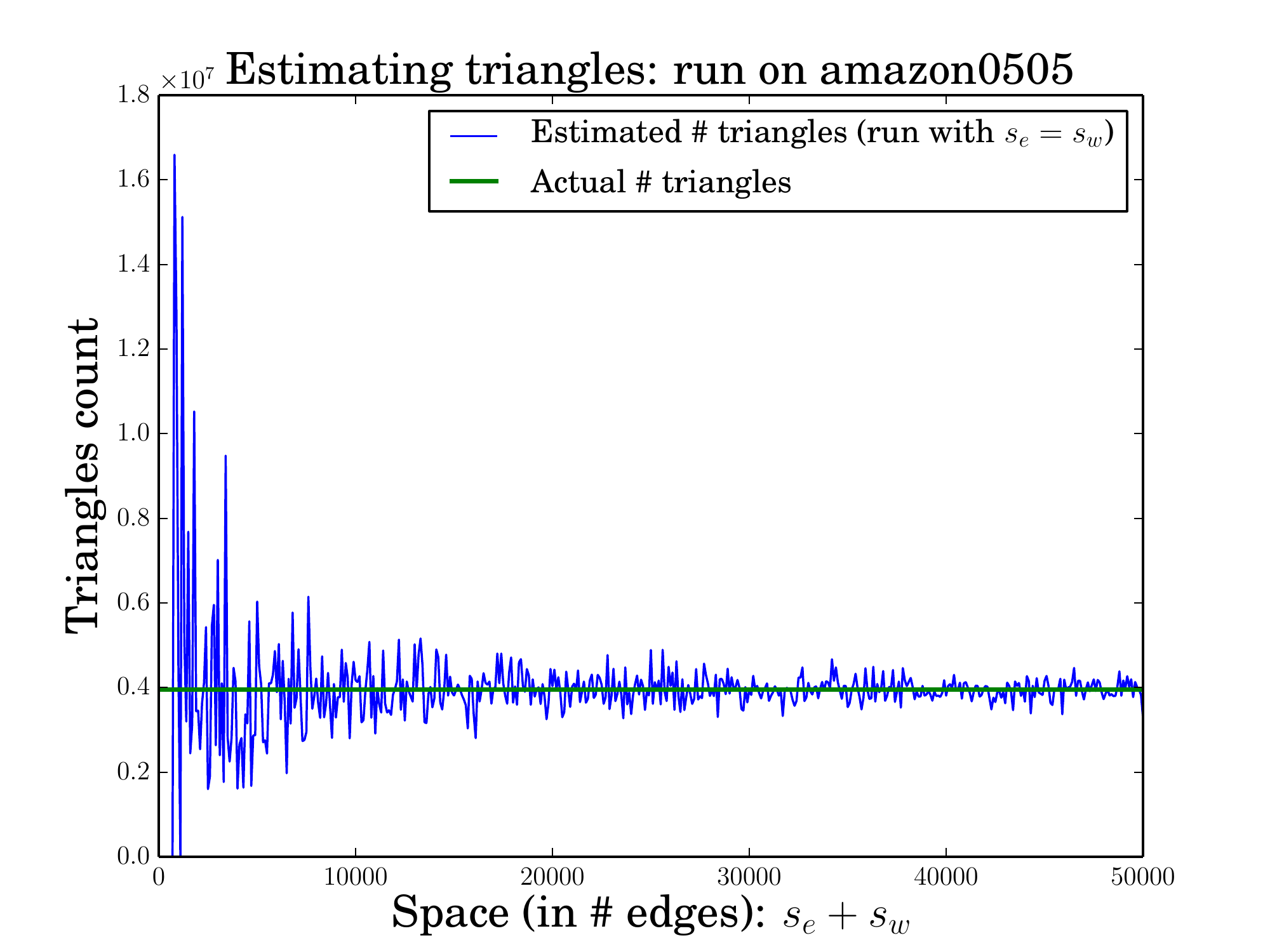}}
   \caption{Concentration of estimate on amazon0505: We run our algorithm keeping the size of edge reservoir and wedge reservoir the same. We plot the transitivity and triangles estimate and observe that they converge to the true value.}
  \label{fig:se-equals-sw-concentration}
\end{figure*}

{\bf Effects of storage on estimates:} We explore the effect that the  sizes of the edge reservoir, $\stor_e$ and the wedge reservoir, $\stor_w$ have on 
the quality of the estimates for $\gcc$. In the first experiment we fix $\stor_e$ to 10K and 20K and increase $\stor_w$.  The results are presented 
in \Fig{varyinga}.
 In this figure, for any point $x$ on the horizontal axis, the corresponding point on the vertical axis is the average error in $[1,x]$.  
In all cases, the error decreases as we increase $\stor_w$. However, it decreases
sharply initially but then flattens showing that the marginal benefit of increasing $\stor_w$ beyond improvements diminish, and it does not help to only increase $\stor_w$. 

In \Fig{varyingb},
we fix $\stor_w$ to 10K and 20K and increase $\stor_e$.  The results are similar to the first case. 

\begin{figure*}[tb]
  \centering
  \subfloat[ \label{fig:varyinga}Varying reservoir wedges]{\includegraphics[scale=0.35]{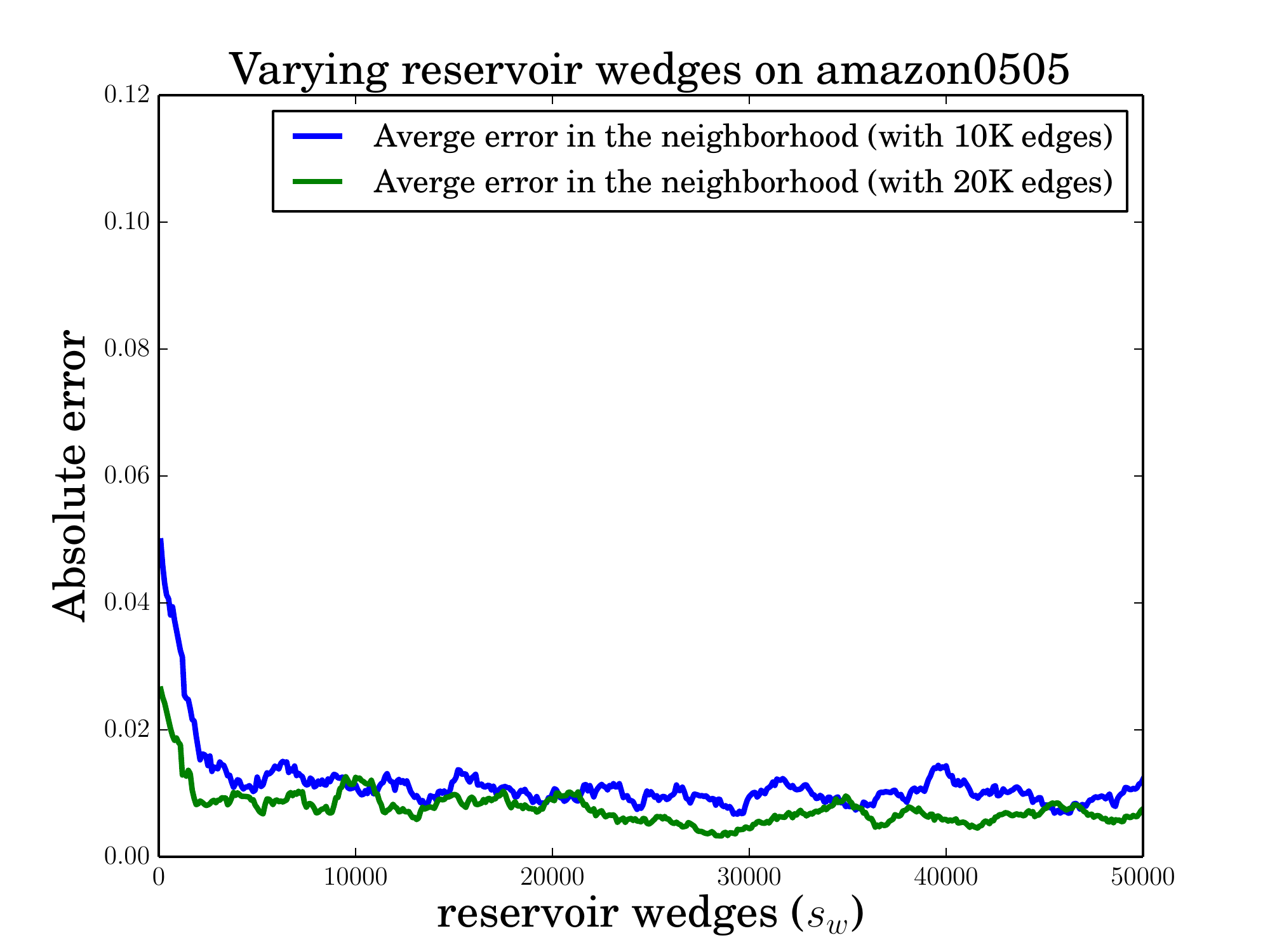}}
  \subfloat[\label{fig:varyingb}Varying reservoir edges]{\includegraphics[scale=0.35]{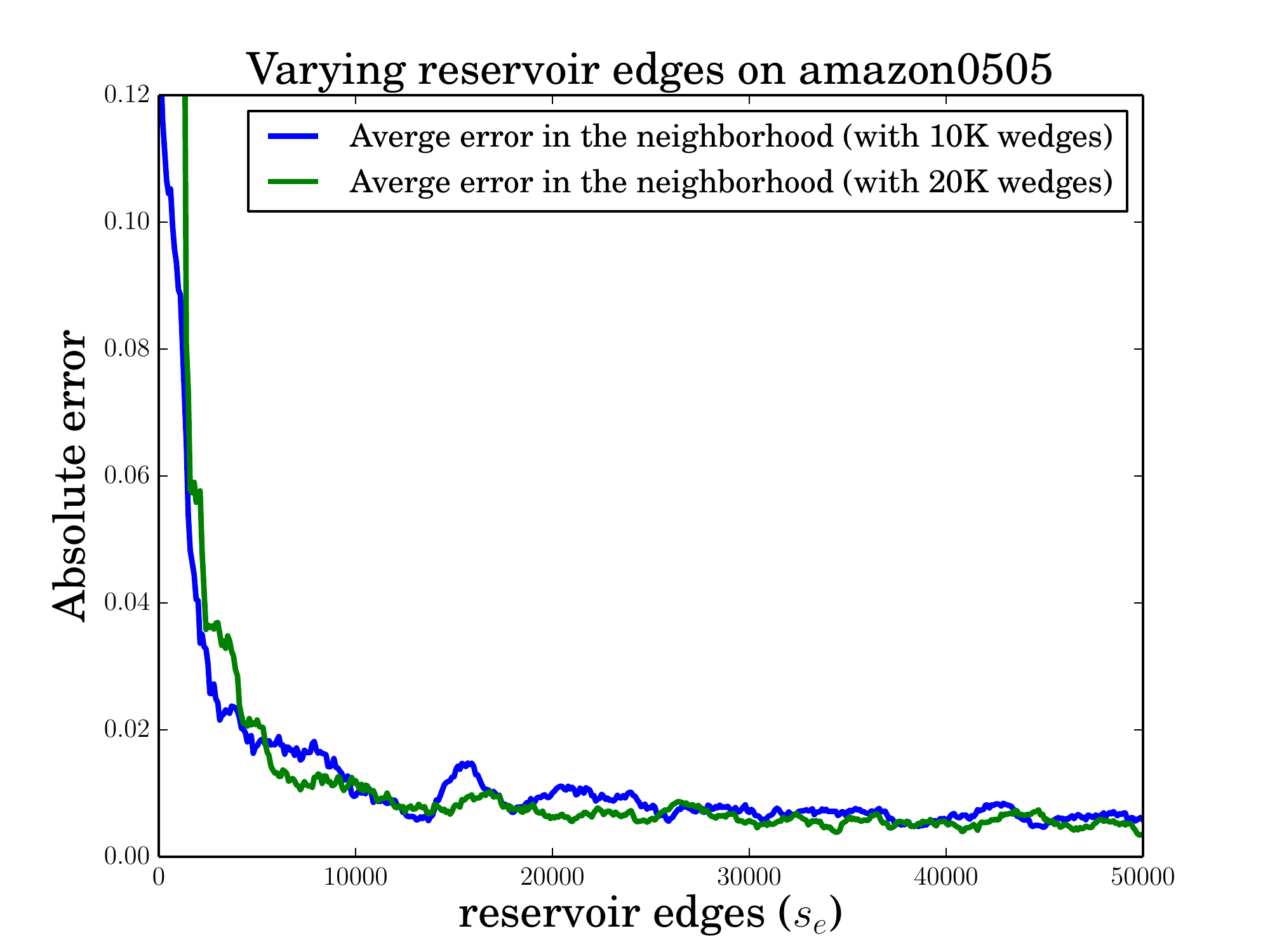}}
  \caption{ How accuracy of transitivity is affected by  varying one of $s_e$ and $s_w$ while keeping the other fixed on amazon0505.}
  \label{fig:varying}
\end{figure*}

{\bf Effect of stream ordering:} 
In this set of experiments, we investigate  the effect of the stream order on the accuracy of the estimates. 
For this purpose, we generate a set of different orderings of the edges in amazon0505 and run \algstream{} on these orderings.
The results are given in \Tab{various-orderings}. We fix the edge and wedge reservoir to 20K and use the following orderings.
\begin{table}[h]
\caption{Run of our algorithm on various orderings of the same graph (web-NotreDame). Each run is made with parameters $s_e = s_w = 20K$. }\label{tab:various-orderings}
\centering
\begin{tabular}{| l |c|c|}
\hline
Orderings									& Absolute error in transitivity	&	Relative error in triangles	\\
\hline
Random permutation							&	0.00035	&	2.39\%							\\
BFS and rest							 	&	0.00775 	&	5.81\%							\\
DFS and rest								&	0.0004   	&	0.88\%							\\
Degree sorted							 	&	0.0007     	& 	0.30\%							\\
Reverse degree sorted						&	0.00385	&	4.56\%							\\
\hline
\end{tabular}
\end{table}
The first ordering is a random ordering. Next, we generate a stream through a breadth first search (bfs) as follows.
We take a bfs tree from a random vertex and list out all edges in the tree. Then, we list the remaining edges in random order.
Our third ordering involves taking a depth first search (dfs) from a random vertex and list out edges in order as seen by the dfs. Finally, 
the next two orderings are obtained by sorting the vertices by degree (increasing and decreasing respectively) and listing all edges incident to a vertex. Note that the last two orderings are incidence streams.

\algstream{} performs well on all these different orderings. There is little deviation in the transitivity values. 
There is somewhat more difference in the triangle numbers, but it never exceeds 5\% relative error.
Overall the results show  that the accuracy often algorithm is invariant to the stream order. 

{\bf The performance of \algbit{}:} Does our heuristic in \algstream{} really help over
independent invocations of \algbit{}? We hope to have convinced the reader of the rapid convergence
of \algstream{}. We implement \algbit{} by simply setting $\stor_e$ to 5K
and $\stor_w$ to $1$. We then run multiple copies of it and output three times the fraction of $1$s (which is an unbiased
estimate for $\gcc$). As shown in \Fig{parallel-runs}, convergence is poor. This is because even when
using a space of 250,000 edges, we only have 250,000/5,000 = 50 independent wedge samples, which
is too small to get a close estimate.

\begin{figure*}[tb]
  \centering
  \includegraphics[scale=0.5]{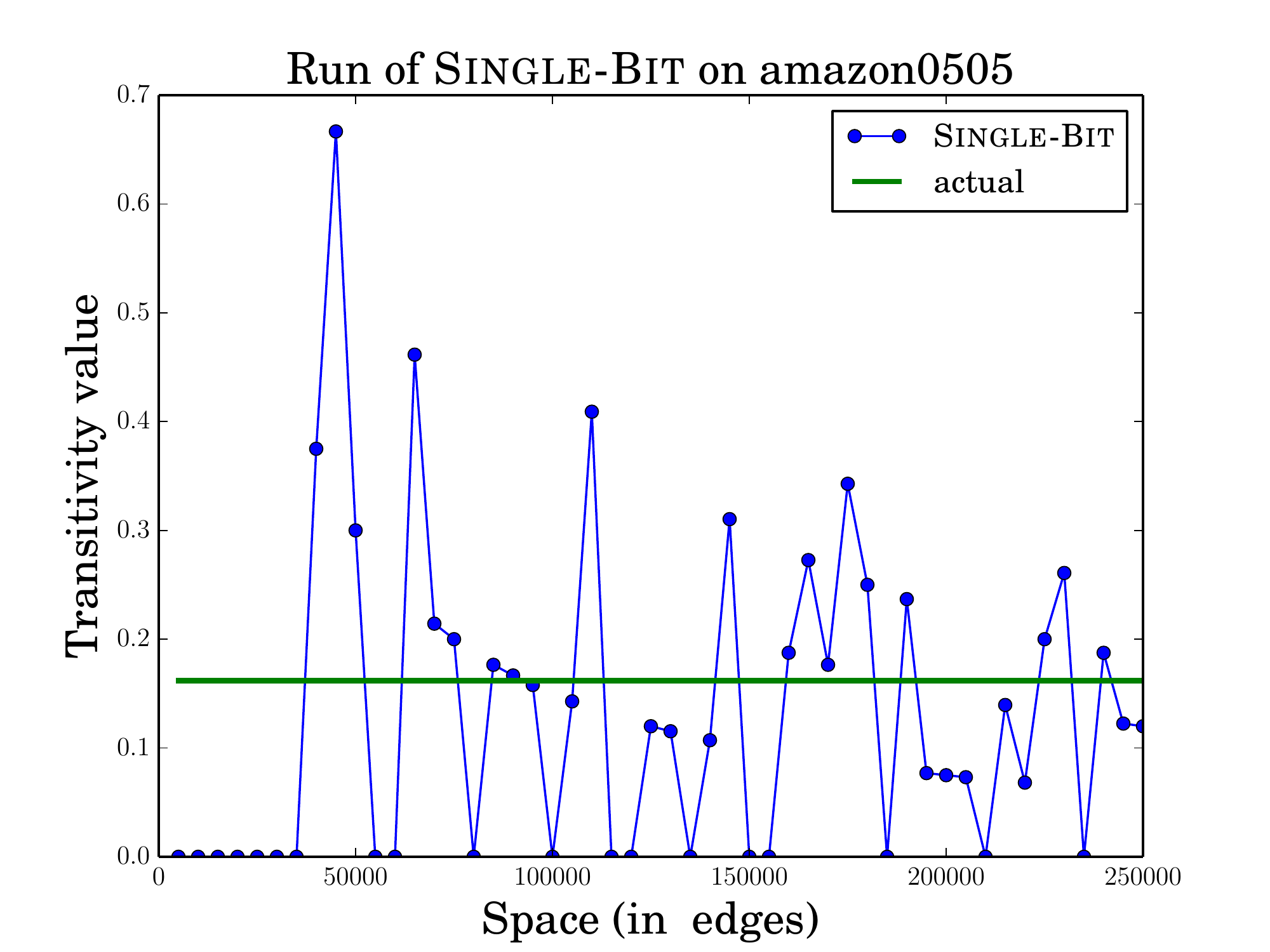}
   \caption{Independent runs on \algbit{} on amazon0505: We fix the edge reservoir $s_e$ to 5K and the wedge reservoir $s_w$ to 1. We plot transitivity estimate obtained by taking the average over independent runs.}
  \label{fig:parallel-runs}
\end{figure*}

{\bf Comparison with previous work:} The streaming algorithm of Buriol et al.~\cite{BuFrLeMaSo06} was implemented and run on real graphs.
The basic sampling procedure involves sampling a random edge and a random vertex and trying the complete a triangle. This
is repeatedly independently in parallel to get an estimate for the number of triangles. Buriol et al. provide various heuristics to speed
up their algorithm, but the core sampling procedure is what was described above.
In general, their algorithms gets fairly large error even with storage of 100K edges. For the amazon0505 graph, and it took a storage of 70K
edges just to get a non-zero answer. Even after 100K edges, the triangles counts had more than 100\% error. (This is consistent with
the experimental results given in~\cite{BuFrLeMaSo06}.)

\section{Conclusion}
\label{sec:conc}
Our streaming algorithm is practical and gives accurate answers, but only works for simple graphs.  
A natural future direction is to consider the streaming setting when the input is a directed graph and/or a multigraph. 

We expect to generalize these ideas to maintain
richer information about triangles. For example, could we maintain degree-wise
clustering coefficients in a single pass? It is likely that these ideas can be used to
counting different types of triangles in directed or other attributed graphs. 

At a higher level, the sampling approach may be useful for other properties. 
We can infer the transitivity of a massive graph by maintaining a small subsample of the edges.
It remains to be seen what other properties can be inferred by similar sampling schemes.

\begin{acks}
The first author would like to thank Shubham Gupta for many helpful discussions.
\end{acks}

\bibliographystyle{ACM-Reference-Format-Journals}
\bibliography{streaming_triangles}  
\end{document}